\newcommand{\grsmn}[3]{\mathcal{G}_{#1}\left(#2,#3\right)}
\DeclareMathOperator{\Span}{span}
\DeclareMathOperator{\matricize}{mat}
\DeclareMathOperator{\rank}{rank}
\newcommand{\bF}{\mathbb{F}}
\newcommand{\cQ}{\mathcal{Q}}
\newcommand{\cR}{\mathcal{R}}
\newcommand{\cU}{\mathcal{U}}
\newcommand{\cW}{\mathcal{W}}
\newcommand{\bolda}{\mathbf{a}}
\newcommand{\boldb}{\mathbf{b}}
\newcommand{\boldc}{\mathbf{c}}
\newcommand{\boldd}{\mathbf{d}}
\newcommand{\boldn}{\mathbf{n}}
\newcommand{\boldu}{\mathbf{u}}
\newcommand{\boldv}{\mathbf{v}}
\newcommand{\boldw}{\mathbf{w}}
\newcommand{\boldy}{\mathbf{y}}
\newcommand{\boldz}{\mathbf{z}}
\newcommand{\boldA}{\mathbf{A}}
\newcommand{\boldB}{\mathbf{B}}
\newcommand{\boldC}{\mathbf{C}}
\newcommand{\boldE}{\mathbf{E}}
\newcommand{\boldI}{\mathbf{I}}
\newcommand{\boldM}{\mathbf{M}}
\newcommand{\boldN}{\mathbf{N}}
\newcommand{\boldbeta}{\boldsymbol{\beta}}
\newcommand{\boldmu}{\boldsymbol{\mu}}
\newcommand{\boldnu}{\boldsymbol{\nu}}
\newcommand{\boldupsilon}{\boldsymbol{\upsilon}}
\newcommand{\boldOmega}{\mathbf{\Omega}}
\newcommand{\boldGamma}{\mathbf{\Gamma}}
\newif\ifePrint
\newcommand\blfootnote[1]{%
 \begingroup
 \renewcommand\thefootnote{}\footnote{#1}%
 \addtocounter{footnote}{-1}%
 \endgroup
}
\newtheorem{construction}{Construction}
\begin{document}
\title{Multivariate Public Key Cryptosystem\\from Sidon Spaces}
\author{Netanel Raviv\inst{1} \and Ben Langton\inst{2} \and Itzhak Tamo\inst{3}}
\authorrunning{N.~Raviv, B.~Langton, and I.~Tamo}

\institute{
	Department of Computer Science and Engineering, Washington University in St.~Louis, St.~Louis, MO 63103, USA, \email{netanel.raviv@wustl.edu} \and 
	Department of Mathematics, Harvey Mudd College, Claremont, CA 91711, USA, \email{blangton@g.hmc.edu} \and 
	Department of Electrical Engineering--Systems, Tel-Aviv University, Tel-Aviv, Israel, \email{zactamo@gmail.com}
}

\maketitle
\begin{abstract}
A Sidon space is a subspace of an extension field over a base field in which the product of any two elements can be factored uniquely, up to constants. This paper proposes a new a public-key cryptosystem of the multivariate type which is based on Sidon spaces, and has the potential to remain secure even if quantum supremacy is attained. This system, whose security relies on the hardness of the well-known MinRank problem, is shown to be resilient to several straightforward algebraic attacks. In particular, it is proved that the two popular attacks on the MinRank problem, the kernel attack and the minor attack, succeed only with exponentially small probability. The system is implemented in software, and its hardness is demonstrated experimentally.
\keywords{Multivariate Public Key Cryptosystem, MinRank Problem, Sidon Spaces}
\end{abstract}

\section{Introduction}
\blfootnote{© IACR 2016. This article is the final version submitted by the authors to the IACR and to Springer-Verlag on March 2nd 2021. The final version is also avaialble at \cite{pkc}.}
Public key cryptosystems (PKCs), such as RSA, are essential in many communication scenarios. However, most number-theoretic PKCs are prone to quantum attacks, and will become obsolete once quantum supremacy is attained. Hence, it is important to devise PKCs whose hardness relies on problems that are hard to solve even with quantum computers at hand. One such problem that has gained increasing attention lately is solving a system of multivariate polynomial (usually quadratic) equations~\cite{HFE,ABC}, which is NP-hard in general~\cite{MinRankCryptanalysis}. PKCs whose hardness relies on solving multivariate polynomial equations are called \textit{Multivariate Public Key Cryptosystems} (MPKCs).

Nearly all MPKCs in the literature were either cryptanalyzed or their efficiency was proved to be limited. Recently, the so-called ABC cryptosystem~\cite{ABC}, that relies on simple matrix multiplication as the encryption scheme, seems to have been broken by~\cite{ABCcryptanalysis}. Earlier works include the Hidden Field Equations (HFE) cryptosystem~\cite{HFE} that has been broken by a MinRank attack~\cite{HFEcryptanalysis}, and the TTM scheme~\cite{TTM} that experienced a similar fate~\cite{TTMcryptanalysis}. In addition, some variants of the HFE scheme, such as ZHFE~\cite{ZHFE} and HFE\textsuperscript{-}~\cite{HFE-} were also successfully attacked by~\cite{ZHFEcryptanalysis} and~\cite{HFE-cryptanalysis}. The MPKC of~\cite{Schulman} was also broken by a MinRank attack. Additional candidates include the oil-and-vinegar scheme~\cite{UOV}, Rainbow~\cite{Rainbow}, and Gui~\cite{Gui}. In light of these recent advances, in this paper we propose a new MPKC which seems to be inherently robust against several natural attacks. This MPKC is based on a newly defined algebraic concept called \textit{Sidon spaces}.

Let~$\bF_q$ denote a finite field with~$q$ elements, and let~$\bF_q^*\triangleq \bF_q\setminus\{0\}$. For an integer~$n$ let~$\bF_{q^n}$ be the algebraic extension of degree~$n$ of~$\bF_q$, and let~$[n]\triangleq \{1,2,\ldots,n\}$. Simply put, a Sidon space~$V$ is a subspace of~$\bF_{q^n}$ over~$\bF_q$ such that the product of any two nonzero elements of~$V$ has a unique factorization over~$V$, up to a constant multiplier from~$\bF_q$.
Sidon spaces were recently defined in~\cite{Vospers} as a tool for studying certain multiplicative properties of subspaces, and their application to error correction in network coding, alongside several explicit constructions that are employed herein, were studied in~\cite{SidonSpaces}. 

In this paper we suggest the \textit{Sidon Cryptosystem}, an MPKC based on Sidon spaces. In a nutshell, this cryptosystem enables the sender to transmit the product of two elements in a secret Sidon space~$V$, without knowing its structure. The receiver uses the structure of~$V$ in order to factor the given product and obtain the plaintext efficiently. A malicious attacker, however, cannot extract the plaintext from the product due to insufficient knowledge about~$V$. The suggested Sidon cryptosystem is based on a specific optimal construction of a Sidon space from~\cite{SidonSpaces}, and yet, other Sidon spaces with comparable parameters can be used similarly. The security of the suggested system relies on the hardness of solving multivariate polynomial equations, and the hardness of the MinRank problem. 

In the MinRank problem, which is believed to be hard even for quantum computers, one must find a low-rank target matrix in the linear span of matrices that are given as input; it arises in settings where one solves a quadratic system of equations via linearization. Cryptographic systems that are based on the MinRank attack are often broken by either of two attacks, the minor attack and the kernel attack (also known as the Kipnis-Shamir attack)~\cite{ComputingLoci}. In the minor attack, one formulates an equation system by setting all small minors of the target matrix to zero, and solves the resulting system (usually) by linearization. The kernel attack exploits the fact that vectors in the kernel of the target matrix give rise to linear equations in the coefficients of its combination; successfully guessing sufficiently many of those will break the system. 

In the sequel we analyze the Sidon cryptosystem in the face of these two attacks, and both are proved to succeed only with exponentially small probability. We additionally analyze attacks that are specific to the Sidon cryptosystem and are not in either of those forms, and show that these require solving polynomial equations outside the range of feasibility. We emphasize that unfortunately, a rigorous proof of hardness that does not rely on a particular attack structure is yet to be found.

This paper is organized as follows. The definition of a Sidon space, alongside relevant constructions and their efficient factorization algorithm, are given in Section~\ref{section:Preliminaries}. The details of the Sidon cryptosystem are given in Section~\ref{section:TheSidonCryptosystem}, and its efficiency is discussed. MinRank attacks, namely, the kernel attack and the minor attack, are discussed in Section~\ref{section:MinRankAttacks}. Attacks which are specifically designed for the Sidon cryptosystem are discussed in Section~\ref{section:OtherAttacks}. Finally, experimental results are reported in Section~\ref{section:experiments}, and concluding remarks are given in Section~\ref{section:Discussion}. 

We adopt the following notational conventions. Scalars are denoted by~$a,b,\ldots$ or~$\alpha,\beta,\ldots$; matrices by~$\boldA,\boldB\ldots$; sets by~$\cU,\cW,\ldots$; linear subspaces and polynomials by~$V,U,\ldots$; and vectors, all of which are row vectors, by~$\boldv,\boldnu,\ldots$. 

\section{Preliminaries}\label{section:Preliminaries}
For integers~$k$ and~$n$ let $\grsmn{q}{n}{k}$ be the set of all~$k$-dimensional subspaces of~$\bF_{q^n}$ over~$\bF_q$. \emph{Sidon spaces} were recently defined in~\cite{Vospers} as a tool for studying certain multiplicative properties of subspaces. 
As noted in~\cite{Vospers}, the term ``Sidon space'' draws its inspiration from a \emph{Sidon set}. A set of integers is called a Sidon set if the sums of any two (possibly identical) elements in it are distinct;
thus, Sidon spaces may be seen as a multiplicative and linear variant of Sidon sets.
In the following definition, for~$a\in\bF_{q^n}$, let $a\bF_{q}\triangleq\{\lambda a \vert \lambda\in\bF_q \}$, which is the subspace over~$\bF_q$ spanned by~$a$.

\begin{definition}[{\cite[Sec.~1]{Vospers}}]\label{definition:SidonSpace}
	A subspace~$V\in\grsmn{q}{n}{k}$ is called a Sidon space if for all nonzero~$a,b,c,d\in V$, if~$ab=cd$ then~$\{a\bF_q,b\bF_q \}=\{c\bF_q,d\bF_q \}$.
\end{definition}

It is shown in~\cite[Thm.~18]{Vospers} and in~\cite[Prop.~3]{SidonSpaces} that if~$V\in\grsmn{q}{n}{k}$ is a Sidon space, then
\begin{align}\label{equation:SidonBounds}
2k\le \dim(V^2)\le \binom{k+1}{2},
\end{align}
where~$V^2\triangleq \Span_{\bF_q}\{u\cdot v\vert u,v\in V\}$, and consequently it follows that~$k\le n/2$. Sidon spaces which attain the upper bound are called \textit{max-span} Sidon spaces, and are rather easy to construct; it is an easy exercise to verify that~$\Span_{\bF_q}\{ \delta^{n_i} \}_{i=1}^k$ is a max-span Sidon space in~$\bF_{q^n}$ for~$n>2k^2(1+o_k(1))$, where~$\delta$ is a primitive element of~$\bF_{q^n}$, and~$\{n_1,\ldots,n_k\}\subseteq [\lfloor n/2 \rfloor]$ is an optimal (i.e., largest) Sidon set~\cite{SidonSets}. Sidon spaces which attain the lower bound in~\eqref{equation:SidonBounds} are called \textit{min-span} Sidon spaces, and are paramount to our work; it will be shown in the sequel that having~$k=\Theta(n)$ is essential to the security of the system. The cryptosystem which is given in this paper employs a min-span Sidon space whose construction is given in the remainder of this section.

Motivated by applications in network coding, constructions of Sidon spaces in several parameter regimes were suggested in~\cite{SidonSpaces}. In particular, the following construction provides a Sidon space~$V\in\grsmn{q}{rk}{k}$ for any~$k$ and any~$r\ge 3$. A slightly more involved variant of this construction is shown in the sequel to provide a Sidon space in~$\grsmn{q}{2k}{k}$, which will be used in our cryptosystem.

\begin{construction} \label{construction:n>=3k} \cite[Const.~11]{SidonSpaces}
	For integers~$r\ge 3$, $k$, and~$q$ a prime power, let~$\gamma\in\bF_{q^{rk}}^*$ be a root of an irreducible polynomial of degree~$r$ over~$\bF_{q^k}$. Then, $V\triangleq\{u+u^q\gamma \vert u\in\bF_{q^k} \}$ is a Sidon space in~$\grsmn{q}{rk}{k}$.
\end{construction}

By choosing the element~$\gamma$ judiciously, a similar construction provides a Sidon space in~$\grsmn{q}{2k}{k}$ for any~$q\ge 3$ as follows.  
For any given nonnegative integer~$k$, let~$\cW_{q-1}\triangleq \{ u^{q-1}\vert u\in\bF_{q^k} \}$ and~$\overline{\cW}_{q-1}\triangleq \bF_{q^k}\setminus \cW_{q-1}$. The next construction requires an element~$\gamma\in\bF_{q^{2k}}$ that is a root of an irreducible quadratic polynomial~$x^2+bx+c$ over~$\bF_{q^k}$, where~$c\in\overline{\cW}_{q-1}$. According to~\cite[Lemma~13]{SidonSpaces}, for any~$c\in\overline{\cW}_{q-1}$ there exist many~$b$'s in~$\bF_{q^k}$ such that~$x^2+bx+c$ is irreducible over~$\bF_{q^k}$, and hence such~$\gamma$ elements abound\footnote{Since~$|\overline{\cW}_{q-1}|=q^k-\frac{q^k-1}{q-1}-1$, a crude lower bound for the number of such elements~$\gamma$ is~$\approx\frac{q-2}{q-1}\cdot q^{k}$.}.

\begin{construction}\label{construction:n=2k} \cite[Const.~15]{SidonSpaces}
	For a prime power~$q\ge 3$ and a positive integer~$k$, let~$n=2k$, and let~$\gamma\in\bF_{q^n}^*$ be a root of an irreducible polynomial~$x^2+bx+c$ over~$\bF_{q^k}$ with~$c\in\overline{\cW}_{q-1}$. The subspace~$V\triangleq \{ u+u^q\gamma\vert u\in\bF_{q^k} \}$ is a Sidon space in~$\grsmn{q}{2k}{k}$.
\end{construction}

The Sidon space~$V$ of Construction~\ref{construction:n=2k} will be used in the sequel to devise an MPKC. 
This subspace admits the following efficient algorithm~\cite[Thm.~16]{SidonSpaces} that for every nonzero~$a$ and~$b$ in~$V$, factors~$ab$ to~$a$ and~$b$ up to constant factors from~$\bF_q$; note that since~$ab=(\frac{1}{\lambda}a)(\lambda b)$ for any~$a,b\in V$ and any~$\lambda\in\bF_q^*$, this algorithm is capable of identifying~$a$ and~$b$ only up to a multiplicative factor in~$\bF_q$. 

Given~$ab$, denote~$a=u+u^q\gamma$ for some nonzero~$u\in\bF_{q^k}$ and $b=v+v^q\gamma$ for some nonzero~$v\in\bF_{q^k}$. Notice that since~$\gamma$ is a root of~$x^2+bx+c$ it follows that
\begin{align*}
ab&=(u+u^q\gamma)(v+v^q\gamma)\\
&=(uv-(uv)^qc)+(uv^q+u^qv-b(uv)^q)\gamma\;,
\end{align*}
and since~$\{ 1,\gamma \}$ is a basis of~$\bF_{q^n}$ over~$\bF_{q^k}$, it follows that one can obtain the values of~$q_0\triangleq uv-(uv)^qc$ and~$q_1\triangleq uv^q+u^qv-b(uv)^q$ by representing~$ab$ over this basis.

Since~$c\in\overline{\cW}_{q-1}$, it follows that the linearized polynomial~$T(x)=x-cx^q$ is invertible on~$\bF_{q^k}$. Hence, it is possible to extract~$uv$ from~$q_0=T(uv)$ by applying~$T^{-1}$.

Knowing~$uv$, extracting~$uv^q+vu^q$ from~$q_1$ is possible by adding~$b(uv)^q$. Therefore, the polynomial $uv+(uv^q+u^qv)x+(uv)^qx^2$ can be assembled, and its respective roots~$-1/u^{q-1}$ and~$-1/v^{q-1}$ can be found. Since these roots determine~$u\bF_q$ and~$v\bF_q$ uniquely, it follows that~$a$ and~$b$ are identified up to order and up to a multiplicative factor in~$\bF_q$.

\section{The Sidon Cryptosystem}\label{section:TheSidonCryptosystem}
In general, for a Sidon space~$V\in \grsmn{q}{n}{k}$ and~$a,b\in V$, factoring~$ab$ to~$a$ and~$b$ requires knowledge about the structure of~$V$, as can be seen from the factoring algorithm suggested above. This intuition leads to the following MPKC, called the \textit{Sidon Cryptosystem}. The crux of devising this system is enabling Bob to encrypt his message into a product~$ab$, without the need to know the precise construction of~$V$ by Alice. This is done by exploiting the bilinear nature of multiplication in finite field extensions.

To this end, we introduce the notion of a \textit{multiplication table} of a subspace. For a given vector $\boldv=(v_1,\ldots,v_k)\in\bF_{q^n}^k$ let~$\boldM(\boldv)\triangleq \boldv^\intercal \boldv$. For an ordered basis~$\boldb=( b_1,b_2,\ldots,b_{n} )$ of~$\bF_{q^n}$ over~$\bF_q$, express~$\boldM(\boldv)$ as a linear combination of matrices over~$\bF_q$, i.e.,
\begin{align*}
	\boldM(\boldv)&=b_1\boldM^{(1)}+b_2\boldM^{(2)}+\ldots+b_{n}\boldM^{(n)},\text{ and let}\\
	\boldM(\boldv,\boldb)&\triangleq (\boldM^{(1)},\boldM^{(2)},\ldots,\boldM^{(n)}). 
\end{align*}
The matrix~$\boldM(\boldv)$ is called \textit{the multiplication table} of~$\boldv$, and the entries of~$\boldM(\boldv,\boldb)$ are called the \textit{coefficient matrices} of~$\boldv$ with respect to~$\boldb$. This notion will be of interest when $\{v_1,\ldots,v_k\}$ is a basis of a Sidon space.

The following cryptosystem relies on choosing a random Sidon space by Construction~\ref{construction:n=2k} (which amounts to randomly choosing a proper~$\gamma$), fixing an arbitrary ordered basis~$\boldnu=(\nu_1,\ldots,\nu_k )$ of~$V$, and interpreting the product of two elements~$a\triangleq \sum_{i=1}^ka_i\nu_i$ and $b\triangleq \sum_{i=1}^{k}b_i\nu_i$ in~$V$ as the bilinear form~$\bolda \boldM(\boldnu)\boldb^\intercal$, where $\bolda=(a_1,\ldots,a_k)\in\bF_q^k$ and~$\boldb=(b_1,\ldots,b_k)\in\bF_q^k$. Even though the suggested cryptosystem relies on Construction~\ref{construction:n=2k}, any Sidon space for which an efficient factorization algorithm exists may be used similarly. A remark about the required ratio~$k/n$ is given shortly.

To describe the message set in the following cryptosystem, let~$\sim$ be an equivalence relation on~$\bF_q^{k\times k}$ such that~$\boldA\sim \boldB$ if~$\boldA=\boldB^\intercal$, for any~$\boldA,\boldB\in\bF_q^{k\times k}$. Further, let~$\cQ_k$ be the set of $k\times k$ rank one matrices over~$\bF_q$, modulo the equivalence relation~$\sim$. That is, $\cQ_k$ is a set of equivalence classes, each of which contains either one symmetric matrix of rank one, or two non-symmetric matrices of rank one, where one is the transpose of the other. 
\ifePrint
It is shown in Lemma~\ref{lemma:Qk} in \nameref{appendix:omittedProof} that $|\cQ_k|=\frac{(q^k-1)(q^k-q)}{2(q-1)}+q^k-1$. 
\else
It can be shown that $|\cQ_k|=\frac{(q^k-1)(q^k-q)}{2(q-1)}+q^k-1$, and a full proof is given in the full version of this paper. 
\fi
In what follows, Alice chooses a random Sidon space~$V$ by Construction~\ref{construction:n=2k}, and publishes its coefficient matrices according to a random basis of~$V$ and a random basis of~$\bF_{q^n}$. Bob then sends Alice an encrypted message by exploiting the bilinear nature of multiplication in field extensions.

\begin{description}
	\item[Parameters:] An integer~$k$ and a field size~$q\ge 3$.
	\item[Private key:] Alice chooses
	\begin{enumerate}
		\item \textbf{A random representation of~$\bF_{q^n}$ over~$\bF_q$:} i.e., a polynomial~$P_A(x)\in\bF_q[x]$ of degree~$n=2k$ which is irreducible over~$\bF_{q}$.
		\item \textbf{A random Sidon space by Construction~\ref{construction:n=2k}:} i.e., a random element~$c\in\overline{\cW}_{q-1}$ and an element~$b\in\bF_{q^k}$ such that~$P_{b,c}(x)\triangleq x^2+bx+c$ is irreducible over~$\bF_{q^k}$, a~$\gamma\in\bF_{q^n}^*$ such that~$P_{b,c}(\gamma)=0$; this~$\gamma$ defines the Sidon space~$V\triangleq\{ u+u^q\gamma\vert u\in\bF_{q^k} \}$.
		\item \textbf{A random \textit{ordered} basis~$\boldnu=( \nu_1,\ldots,\nu_k )$ of~$V$:} which is equivalent to choosing a random invertible~$k\times k$ matrix over~$\bF_q$.
		\item \textbf{A random \textit{ordered} basis~$\boldbeta = ( \beta_1,\ldots,\beta_n )$ of~$\bF_{q^n}$ over~$\bF_q$:} which is equivalent to choosing a random invertible~$n\times n$ matrix over~$\bF_q$.
	\end{enumerate}
	
	\item[Public key:] Alice publishes~$\boldM(\boldnu,\boldbeta)=(\boldM^{(1)},\ldots,\boldM^{(n)})$.
	
	\item[Encryption:] The message to be encrypted is seen as an equivalence class in~$\cQ_k$. Bob chooses arbitrary~$\bolda=(a_1,\ldots,a_k)$ and~$\boldb=(b_1,\ldots,b_k)$ that correspond to his message (i.e., such that~$\bolda^\intercal \boldb$ is in the corresponding equivalence class in~$\cQ_k$), and sends $E(\bolda,\boldb)\triangleq\left(\bolda \boldM^{(i)} \boldb^\intercal\right)_{i=1}^n$ to Alice.
	
	\item[Decryption:] Alice assembles 
	\begin{align*}
	\sum_{i=1}^{n}\bolda \boldM^{(i)} \boldb^\intercal\cdot \beta_i&=\bolda \boldM(\boldnu) \boldb^\intercal=\bolda\boldnu^\intercal\boldnu\boldb^\intercal=\left(\sum_{i=1}^{k}a_i\nu_i\right)\left(\sum_{i=1}^{k}b_i\nu_i\right)\triangleq ab\;.
	\end{align*}
	Since~$a$ and~$b$ are in the Sidon space~$V$, they can be retrieved from~$ab$ up to order and up to a multiplicative factor from~$\bF_q$ (see Section~\ref{section:Preliminaries}). The respective~$\bolda$ and~$\boldb$ are then retrieved by representing~$a$ and~$b$ over~$\boldnu$. Since~$\bolda$ and~$\boldb$ correspond to a unique equivalence class in~$\cQ_k$, it follows that they determine the message sent by Bob uniquely. 
\end{description}

\ifePrint
An alternative scheme which employs randomization is given in \nameref{appendix:randomizedEncryption}. 
\else
An alternative scheme which employs randomization is given in the full version of this paper. 
\fi
One clear advantage of the above system is that its \textit{information rate} approaches~$1$ as~$k$ grows. The information rate is defined as the ratio between the number of bits in Bob's message and the number of bits that are required to transmit the corresponding cyphertext. 
Due to the size of~$\cQ_k$, given earlier,
it follows that the number of information bits in Bob's message approaches~$2k\log_2q$ as~$k$ grows; this is identical to the number of information bits in the cyphertext~$E(\bolda,\boldb)$. 

On the other hand, a clear disadvantage is that the public key is relatively large in comparison with the size of the plaintext; due to the symmetry of the coefficient matrices, the public key contains $k^2(k+1)$ elements\footnote{That is,~$n=2k$ matrices, each containing~$\binom{k+1}{2}$ elements.} in~$\bF_q$, whereas the plaintext contains approximately~$2k$ elements in~$\bF_q$. This disadvantage is apparent in some other MPKCs as well. For instance, in the ABC cryptosystem~\cite[Sec.~3]{ABC}, to transmit a message of~$k$ field elements, $2k$ quadratic polynomials in~$k$ variables are evaluated. Hence, the information rate is~$\frac{1}{2}$, and in order to transmit~$k$ field elements, a public key of~$k^2(k+1)$ field elements is required. Our system suffers from a large public key as many other MPKCs, albeit at information rate which approaches~$1$.

\begin{remark}[A note about performance]\label{remark:performance}
Both encoding and decoding require only elementary operations over finite fields. Given~$\bolda$ and~$\boldb$, Bob encrypts by computing~$n$ bi-linear transforms in~$O(k^3)$. Given the cypertext, Alice obtains~$ab$ using~$O(k^2)$ operations, and follows the factorization algorithm from Section~\ref{section:Preliminaries}. This algorithm includes change-of-basis to~$\{1,\gamma\}$, which is equivalent to solving a linear equation, followed by applying a pre-determined linear transform~$T^{-1}$, solving a univariate quadratic polynomial over~$\bF_{q^n}$, and finally two computation of inverse (e.g., by the extended Euclidean algorithm) and two extractions of $(q-1)$'th root (e.g., by the~$O(k^3)$ algorithm of~\cite{Roots}). Overall, assuimg~$q$ is constant, both encoding and decoding require~$O(k^3)$ operations. Key generation can be done via a simple randomized process, and experimental results are given in Section~\ref{section:experiments}.
\end{remark}

\begin{remark}[A note about parameters]\label{remark:maxSpan}
	The fact that~$n=2k$ (or more generally, that~$k=\Theta(n)$) in the Sidon cryptosystem above seems to be essential to the security of the system. For example, using a max-span Sidon space~\cite[Sec.~IV]{SidonSpaces}, in which the set~$\{\nu_i\nu_j\}_{i,j\in[k]}$ is linearly independent over~$\bF_q$ and thus~$n\ge {k+1\choose 2}$, is detrimental to the security of the system---it is easy to verify that if~$V$ is a max-span Sidon space, then $\Span_{\bF_q}(\{\boldM^{(i)}\}_{i=1}^n)$ is the set of all~$k\times k$ symmetric matrices over~$\bF_q$. Hence, given~$E(\bolda,\boldb)=( \bolda \boldM^{(i)}\boldb^\intercal )_{i=1}^n$, by using linear operations one can have $( \bolda \boldC_{i,j}\boldb^\intercal )_{i,j\in[k]}$, where~$\boldC_{i,j}$ is a matrix which contains~$1$ in its~$(i,j)$-th entry, $1$ in its~$(j,i)$-th entry, and zero elsewhere, and as a result the expressions~$\{ a_ib_i \}_{i=1}^k$ and~$\{ a_ib_j+a_jb_i \}_{i>j}$ are obtained. Clearly, these values are the coefficients of~$p_a\cdot p_b$, where 
	\begin{align*}
	p_a(x_1,\ldots,x_k)&\triangleq \sum_{i\in[k]}a_ix_i,&\mbox{ and }
	\hspace{0.9cm}p_b(x_1,\ldots,x_k)\triangleq \sum_{i\in[k]}b_ix_i\;,
	\end{align*}	
	and thus~$\bolda$ and~$\boldb$ could be identified by factoring $p_a\cdot p_b$.
\end{remark}

\section{MinRank Attacks}\label{section:MinRankAttacks}
In what follows, we consider several attacks that are based on the well-known NP-complete problem\footnote{Or more precisely, the \textit{square} MinRank \textit{search} problem.} MinRank~\cite{MinRankCryptanalysis}. In all of these attacks, it is shown that breaking the system requires solving some special case of MinRank, and the feasibility of success is discussed.
\begin{description}
	\item[The MinRank problem.] 
	\item[Input:] Integers~$k,n,r$ and linearly independent matrices~$\boldN^{(0)},\boldN^{(1)},\ldots,\boldN^{(n)}$ in~$\bF^{k\times k}$ for some field~$\bF$.
	\item[Output:] A tuple~$(\lambda_1,\ldots,\lambda_n)\in\bF^n$, not all zero, such that\[\rank_{\bF}\left(\sum_{i=1}^{n}\lambda_i\boldN^{(i)}-\boldN^{(0)}\right)\le r.\]
\end{description}

In this section, the purpose of the attacker Eve is to find an equivalent secret key~$V'$. That is, Eve begins by guessing an irreducible polynomial~$P_E(x)$ of degree~$n=2k$ to define~$F_E=\bF_q[x]\bmod(P_E(x))=\bF_{q^n}$, where~$(P_E(x))$ is the ideal generated by~$P_E(x)$ in~$\bF_q[x]$. Then, since there exists a field isomorphism~$f:F_A\to F_E$, and since~$\nu_s\nu_t=\sum_{i=1}^n(\boldM^{(i)})_{s,t}\beta_i$ by the definition of the system, it follows that
\begin{align}\label{equation:isomorphism}
	f(\nu_s\nu_t)=f(\nu_s)f(\nu_t)=f\left(\sum_{i=1}^{n}\boldM^{(i)}_{s,t}\beta_i\right)=\sum_{i=1}^{n}\boldM^{(i)}_{s,t}f(\beta_i).
\end{align}
Namely, the tuple~$(f(\beta_i))_{i=1}^n$ is a solution to the MinRank problem whose parameters are~$r=1$, $\boldN^{(0)}=0$, $n=2k$, $\bF=F_E$, and~$\boldN^{(i)}=\boldM^{(i)}$ for~$i\in[n]$. Then, factoring the resulting rank one matrix~$\sum_{i=1}^{n}\boldM^{(i)}f(\beta_i)$ to~$f(\boldnu)^\intercal f(\boldnu)$ enables Eve to find~$V'=f(V)$, i.e., the subspace in~$F_E$ which is isomorphic to~$V$ in~$F_A$. With~$f(V)$ at her disposal, Eve may break the cryptosystem. 

To the best of the authors' knowledge, this solution is not necessarily unique; furthermore, it is unclear if breaking the system is possible if a solution is found which is not a basis of~$F_E$ over~$\bF_q$, or if the resulting~$V'$ is not a Sidon space. Nevertheless, we focus on the hardness of finding \textit{any} solution. Moreover, due to~\eqref{equation:isomorphism}, for convenience of notation we omit the isomorphism~$f$ from the discussion.

\subsection{The Kernel Attack}\label{section:kernelAttack}
The kernel formulation of MinRank relies on the fact that any nonzero vector~$\boldv\in\bF_{q^n}^k$ in $K\triangleq\ker_{\bF_{q^n}}(\sum_{i\in[n]}\beta_i\boldM^{(i)})$ gives rise to~$k$ $\bF_{q^n}$-linear equations in~$y_1,\ldots,y_n$, namely, the~$k$ equations given by~$(\sum_{i=1}^{n}y_i\boldM^{(i)} )\boldv^\intercal=0$. To find the correct $y_1,\ldots,y_n\in\bF_{q^n}$, sufficiently many~$\boldv$'s in~$K$ must be found. For example, finding~$\boldv_1\in K$ yields~$k$ equations in~$n=2k$ variables, and hence there are at least~$(q^n)^k$ possible solutions, depending on the linear dependence of these equations. Finding an additional~$\boldv_2\in K$ adds another~$k$ equations, which are likely to reduce the number of possible values for~$y_1,\ldots,y_n$ further. By repeating this process, the attacker wishes to reduce the dimension of the solution space sufficiently so that the solution~$y_1,\ldots,y_n$ could be found.

Since~$K$ is unknown, the attacker resorts to uniformly random guesses of $\boldv_1,\boldv_2\in\bF_{q^n}^k$, hoping to get them both in~$K$. However, since~$\dim K=k-1$, it follows that
\begin{align}\label{equation:kernelAttackProb}
	\Pr_{\boldv\in\bF_{q^n}^k}(\boldv\in K)=\frac{|K|}{|\bF_{q^n}^k|}=\frac{(q^n)^{k-1}}{(q^n)^k}=\frac{1}{q^n},
\end{align}
and hence the probability of finding even \textit{a single}~$\boldv\in K$ is exponentially small in the message length. 

\begin{remark}[Kernel attack over the base field]\label{remark:kernelOverBasefield}
Recall that $\boldM(\boldnu)=\sum_{i\in[n]}\beta_i\boldM^{(i)}$. In order to make the above attack feasible, one may suggest to guess nonzero vectors~$\boldv\in\bF_q^k$ rather than~$\boldv\in\bF_{q^n}^k$. However, it is easy to see that for any nonzero vector~$\boldv\in\bF_q^k$ we have $\boldM(\boldnu) \boldv\neq 0$, and in fact it is a vector with no nonzero entries. Indeed, $\boldM(\boldnu)$ is the multiplication table of $\boldnu = (\nu_1,\ldots,\nu_k)$, which is a basis of the Sidon space~$V$. Hence, $\boldM(\boldnu)\boldv$ is a vector whose $i$'th coordinate equals $\nu_i (\sum_{j\in[k]} v_j\nu_j)$. Since  the $\nu_j$'s are linearly independent over $\mathbb{F}_q$ and~$\boldv$ is nonzero, the second term in the product is nonzero, and hence so is the product.

\end{remark}

\begin{remark}[Kipnis-Shamir formulation] In a variant of this attack, one guesses kernel vectors in a systematic form, rather than in a general form. That is, the system
\begin{align*}
    \left( \sum_{i=1}^n y_i\boldM^{(i)} \right)
    \begin{pmatrix}
    1 & 0 & \ldots & 0\\
    0 & 1 & \ldots & 0\\
    \vdots & \vdots &\ddots &\vdots\\
    0 & 0 & \ldots & 1\\
    z_1 & z_2 & \ldots & z_{k-1}
    \end{pmatrix}=0.
\end{align*}
has a solution with~$y_1,\ldots,y_n,z_1,\ldots,z_{k-1}\in\bF_{q^n}$ (technically, the position of the non-unit row in the r.h.s matrix can be arbitrary; however, this can be amended by repeating the algorithm~$k$ times with different positions, or by random guessing). Similar to the attack above, one can guess two column vectors from the r.h.s matrix, and solve the resulting system, which is linear in the~$y_i$'s. However, it is readily verified that the probability to guess each~$z_i$ correctly is~$q^{-n}$, and hence analysis similar to~\eqref{equation:kernelAttackProb} applies. Alternatively, one can treat both the~$y_i$'s and the~$z_i$'s as variables over~$\bF_{q^n}$, and solve the resulting quadratic system using Gr\"{o}bner basis algorithms. Very recently~\cite{KSformulation2,KSformulation1}, it was shown that in some parameter regimes, such Gr\"{o}bner basis algorithms admit an inherent structure that can be utilized to reduce the computation time, often significantly (e.g., for the HFE cryptosystem). Whether the Sidon cryptosystem admits a similar structure remains to be studied.
\end{remark}

\subsection{The Minor Attack}\label{section:minorAttack}
In the minor attack of MinRank, one considers the system of homogeneous quadratic equations given by setting all~$2\times 2$ minors of~$\sum_{i\in[n]}y_i\boldM^{(i)}$ to zero, and (usually) solves by linearization. That is, the system is considered as a linear one in the~$\binom{n+1}{2}$ variables~$\{ z_{i,j} \}_{i\le j, i,j\in[n]}$, where~$z_{i,j}$ represents~$y_iy_j$ for every~$i$ and~$j$. The resulting homogeneous linear system has a right kernel of dimension at least one; if it happens to be at most one, the attacker finds a nonzero solution~$\boldw=(w_{i,j})_{i\le j}$ and arranges it in a symmetric matrix
\begin{align}\label{equation:vectorToMatrix}
	\matricize(\boldw)=\begin{pmatrix}
		w_{1,1} & w_{1,2} & \ldots & w_{1,n}\\
		w_{1,2} & w_{2,2} & \ldots & w_{2,n}\\
		\vdots & \cdots & \ddots & \vdots \\
		w_{1,n} & w_{2,n} & \cdots & w_{n,n}
	\end{pmatrix}.
\end{align} 
Then, the attacker finds a rank one decomposition~$(w_1,\ldots,w_n)^\intercal(w_1,\ldots,w_n)$ of~\eqref{equation:vectorToMatrix} (which is guaranteed to exist, since the solution~$z_{i,j}=y_iy_j$ has a rank one decomposition, and the dimension of the kernel is one), which provides a solution. 

In most systems the dimension of the kernel will indeed be at most one. Otherwise the attacker is left with yet another MinRank problem, that we call \textit{secondary}, in which a ``rank-one vector'' (that is, a vector~$\boldw$ such that~$\matricize(\boldw)$ is of rank one) must be found in the kernel. In what follows it is shown that this attack on the Sidon cryptosystem results in the latter scenario. That is, attempting to solve the minor attack via linearization results in a linear system with a large kernel. Moreover, it is shown that the secondary (and tertiary, etc.) attack suffers from the same effect.

Let~$\boldOmega$ be the quadratic system which results from setting all~$2\times 2$ minors of~$\sum_{i\in[n]}y_i\boldM^{(i)}$ to zero. This system contains~$\binom{k}{2}^2$ equations, each is a linear combination over~$\bF_q$ of the~$\binom{n+1}{2}$ monomials $y_1^2,\ldots,y_n^2,y_1y_2,\ldots,y_{n-1}y_{n}$. To break the cryptosystem, the values of~$y_1,\ldots,y_{n}$ in a solution to~$\boldOmega$ should form a basis to~$\bF_{q^n}$ over~$\bF_q$, and as discussed earlier, it is unclear if the system can be broken otherwise. Yet, for generality we focus on the hardness of finding \textit{any} solution. 

Let~$\boldOmega_{\text{lin}}$ be the matrix which results from linearizing~$\boldOmega$. That is, each of the~$\binom{n+1}{2}$ columns of~$\boldOmega_{\text{lin}}$ is indexed by a monomial~$y_sy_t$, and each row is indexed by a minor~$((i,j),(\ell,d))$ (i.e., the minor that is computed from the~$i$'th and~$j$'th rows and the~$\ell$'s and~$d$'th columns). The value of an entry in column~$y_sy_t$ and row~$((i,j),(\ell,d))$ is the coefficient of~$y_sy_t$ in the equation for the~$2\times 2$ minor of $(\sum_{i\in[n]}y_i\boldM^{(i)})$ in rows~$i$ and~$j$, and columns~$\ell$ and~$d$. Note that a solution to~$\boldOmega$ corresponds to a vector in the right kernel of~$\boldOmega_{\text{lin}}$, but the inverse is not necessarily true. 

We begin by discussing several aspects of~$\boldOmega_{\text{lin}}$. First, since the matrices~$\boldM^{(i)}$ are symmetric, many rows in~$\boldOmega_{\text{lin}}$ are identical; minor~$((i,j),(\ell,d))$ identical to minor~$((\ell,d),(i,j))$. Hence, the effective number of rows is at most
\begin{align}\label{equation:symmetryReduction}
	\binom{\binom{k}{2}+1}{2}.
\end{align}
Second, $\boldOmega_{\text{lin}}$ is over~$\bF_q$, while the required solution is in~$\bF_{q^n}$. One way to circumvent this is by representing every~$y_i$ using~$n$ variables over~$\bF_q$. The resulting linearized system can be described using Kronecker products. By using the fact that the rank of a Kronecker product is the product of the individual ranks, it can be easily shown that this system has a large kernel, and thus solving by linearization is not feasible. The full details of this approach are given in  
\ifePrint
\nameref{appendix:LinearizationAttack}.
\else
the full version.
\fi

More importantly, in contrast to Remark~\ref{remark:kernelOverBasefield}, one can simply find~$\ker_{\bF_q}(\boldOmega_{\text{lin}})$;
since~$\rank_{\bF_q}(\boldOmega_{\text{lin}})=\rank_{\bF_{q^n}}(\boldOmega_{\text{lin}})$, it follows that the true solution~$(z_{i,j})_{i\le j}=(\beta_i\beta_j)_{i\le j}$ lies in~$\Span_{\bF_{q^n}}(\ker_{\bF_q}(\boldOmega_{\text{lin}}))$. Put differently, one can solve~$\boldOmega$ via linearization over~$\bF_q$ (i.e., obtain an~$\bF_q$-basis to~$\boldOmega_{\text{lin}}$'s right kernel), span it over~$\bF_{q^n}$, and search for a rank one vector. However, in what follows it is shown that the rank of~$\boldOmega_{\text{lin}}$ is low (specifically, $\rank(\boldOmega_{\text{lin}})\le \binom{n+1}{2}-n$), and hence this approach is not feasible either. One might wonder if the secondary MinRank problem that emerges is itself solvable by linearization, for which we show that the answer is negative, and the proof is similar.

\subsubsection{Bounding the rank of~$\boldOmega_{\text{lin}}$.}
Let~$\boldnu=(\nu_1,\ldots,\nu_k)$ be the secret basis of~$V$ and let $\boldu=(\nu_1,\ldots,\nu_n)$ be an extension of~$\boldnu$ to a complete   basis of~$\bF_{q^n}$ over~$\bF_q$.  Let~$\boldbeta=(\beta_1,\ldots,\beta_n)$ be the secret basis of~$\bF_{q^n}$. Therefore, we have that~$\boldu^\intercal\boldu=\sum_{i\in[n]}\beta_i\boldN^{(i)}$ for some matrices~$\boldN^{(i)}\in\bF_q^{n\times n}$. It is readily verified that for every~$i\in[n]$, the upper left~$k\times k$ submatrix of~$\boldN^{(i)}$ is the public key coefficient matrix~$\boldM^{(i)}$. 
In addition, let~$\boldE\in\bF_q^{n\times n}$ be the change-of-basis matrix such that~$\boldbeta=\boldu\boldE$, and then
\begin{equation}\label{stam}
    \boldbeta^\intercal\boldbeta=\boldE^\intercal\boldu^\intercal\boldu\boldE=\sum_{i\in[n]}\beta_i\boldE^\intercal\boldN^{(i)}\boldE.
\end{equation}

Construct a system~$\boldGamma$ of quadratic equations in the variables~$y_1,\ldots,y_n$ by setting all the~$2\times 2$ minors of~$\sum_{i\in[n]}y_i\boldN^{(i)}$ to zero, and let $\boldGamma_{\text{lin}}$ be the linearization of the set of equations in $\boldGamma$, obtained by replacing each monomial $y_iy_j$ by the variable $z_{i,j}$. Notice that one obtains~$\boldOmega_{\text{lin}}$ from~$\boldGamma_{\text{lin}}$ by omitting every row~$((i,j),(\ell,d))$ of~$\boldGamma_{\text{lin}}$ with either one of~$i,j,\ell,d$ larger than~$k$. Therefore, it follows that~$\ker (\boldGamma_{\text{lin}})\subseteq \ker(\boldOmega_{\text{lin}})$.

We claim that each matrix $\boldE^\intercal\boldN^{(l)}\boldE, l\in [n]$ defines a valid solution to $\boldGamma_{\text{lin}}$ simply by setting $z_{i,j}=(\boldE^\intercal\boldN^{(l)}\boldE)_{i,j}=(\boldE^\intercal\boldN^{(l)}\boldE)_{j,i}$. Then, it will be shown that~$\{\boldE^\intercal\boldN^{(l)}\boldE\}_{l\in [n]}$ are linearly independent, and thus so are the solutions they define. This would imply that the dimension of the solution space of~$\boldGamma_{\text{lin}}$ is at least~$n$, and since $\ker (\boldGamma_{\text{lin}})\subseteq \ker(\boldOmega_{\text{lin}})$, it would also imply that the dimension of the solution space of~$\boldOmega_{\text{lin}}$ is at least~$n$.

For an element $\alpha\in \mathbb{F}_{q^n}$ and $i\in [n]$ let $(\alpha)_i\in \mathbb{F}_q$ be the  coefficient of $\beta_i$ in the expansion of $\alpha$ as a linear combination of the $\beta_j$'s over $\mathbb{F}_q$, i.e.,~$\alpha=\sum_{i\in[n]}(\alpha)_i\beta_i$. Then, it follows from the definition of the~$\boldN^{(l)}$'s that $\boldN^{(l)}_{i,j}=(\nu_i\nu_j)_l$. Similarly, it follows from~\eqref{stam} that~$(\boldE^\intercal\boldN^{(l)}\boldE)_{i,j}=(\beta_i\beta_j)_l$.
\begin{lemma}\label{lemma:matricesinKerGamma}
For every $l\in [n]$ the assignment $z_{i,j}=(\boldE^\intercal\boldN^{(l)}\boldE)_{i,j}$ is a solution for    $\boldGamma_{\text{lin}}$. 
\end{lemma}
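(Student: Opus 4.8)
The plan is to show that the vector $(z_{i,j})_{i\le j} = \big((\boldE^\intercal\boldN^{(l)}\boldE)_{i,j}\big)_{i\le j}$ lies in $\ker(\boldGamma_{\text{lin}})$, i.e., that it satisfies every equation of $\boldGamma_{\text{lin}}$. Each such equation comes from a $2\times 2$ minor $((i,j),(\ell,d))$ of the generic matrix $\sum_{t\in[n]}y_t\boldN^{(t)}$; linearizing, the equation reads $z_{i,\ell}z_{j,d} - z_{i,d}z_{j,\ell} = 0$ once we substitute $z_{s,t}=y_sy_t$, but \emph{as a linear equation} it asserts that a certain $\bF_q$-linear combination of the $z_{s,t}$'s vanishes — namely the one obtained by expanding $(\sum_t y_t \boldN^{(t)})_{i,\ell}(\sum_t y_t\boldN^{(t)})_{j,d} - (\sum_t y_t\boldN^{(t)})_{i,d}(\sum_t y_t\boldN^{(t)})_{j,\ell}$ in the monomials $y_sy_t$ and reading off coefficients. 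So the real content is: for each fixed $l$, the matrix $\boldE^\intercal\boldN^{(l)}\boldE$ has all $2\times 2$ minors equal to zero, i.e., it has rank at most one.

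Here is where the identification $(\boldE^\intercal\boldN^{(l)}\boldE)_{i,j} = (\beta_i\beta_j)_l$, already recorded just before the lemma, does all the work. Fix $l$ and consider the $n\times n$ matrix $\boldA^{(l)}$ with $(i,j)$ entry $(\beta_i\beta_j)_l \in \bF_q$. I want to argue $\rank_{\bF_q}(\boldA^{(l)}) \le 1$. But $\boldA^{(l)}$ is exactly the $l$-th coefficient matrix of the multiplication table $\boldbeta^\intercal\boldbeta$ with respect to the basis $\boldbeta$ (this is precisely equation~\eqref{stam}, where $\boldE^\intercal\boldN^{(l)}\boldE$ is the $l$-th coefficient matrix). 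Now $\boldbeta^\intercal\boldbeta$ is a rank-one matrix over $\bF_{q^n}$, and more importantly, writing it over the basis $\boldbeta$ itself makes its coefficient matrices trivial: $\beta_i\beta_j = \sum_l (\beta_i\beta_j)_l \beta_l$, but since $\{\beta_l\}$ is a basis, the coefficient of $\beta_l$ in $\beta_i\beta_j$ is a specific scalar, and the point is simply that $\boldbeta^\intercal\boldbeta = \sum_l \beta_l \boldA^{(l)}$ with $\boldA^{(l)}$ the matrix of these scalars. The cleanest route is: any $2\times 2$ minor of $\boldbeta^\intercal\boldbeta$ over $\bF_{q^n}$ vanishes (it is rank one), so $\beta_i\beta_j\cdot\beta_\ell\beta_d - \beta_i\beta_d\cdot\beta_\ell\beta_j = 0$ in $\bF_{q^n}$; this is an identity in the field, hence in particular its expansion in the basis $\{\beta_l\}$ vanishes coordinate-wise. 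But that expansion, coordinate $l$, is precisely (a sum over split monomials of) the linear equation of $\boldGamma_{\text{lin}}$ applied to the vector $\big((\beta_s\beta_t)_l\big)$. So I would: (1) recall $\boldGamma_{\text{lin}}$'s equation from minor $((i,j),(\ell,d))$ is the vanishing of the $\bF_q$-coefficient expansion of $z_{i,\ell}z_{j,d}-z_{i,d}z_{j,\ell}$ under $z_{s,t}\mapsto y_sy_t$; (2) substitute $z_{s,t}=(\beta_s\beta_t)_l$ and observe that the resulting $\bF_q$-linear combination is exactly the $l$-th coordinate, in the basis $\boldbeta$, of $(\beta_i\beta_\ell)(\beta_j\beta_d)-(\beta_i\beta_d)(\beta_j\beta_\ell)$; (3) note this element of $\bF_{q^n}$ is $0$ because multiplication in $\bF_{q^n}$ is commutative and associative (it is literally $\beta_i\beta_j\beta_\ell\beta_d - \beta_i\beta_j\beta_\ell\beta_d$).

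The main obstacle — really the only thing needing care — is the bookkeeping in step (2): making precise that linearizing the minor equation and then plugging in $z_{s,t}=(\beta_s\beta_t)_l$ yields the same $\bF_q$-scalar as extracting the $\beta_l$-coordinate of the (identically zero) product of field elements. This requires being careful that the linearized equation treats $z_{s,t}$ and $z_{t,s}$ consistently (they are equal here, since $(\beta_s\beta_t)_l=(\beta_t\beta_s)_l$), and that the bilinear expansion $\big(\sum_s y_s\boldN^{(s)}\big)_{i,\ell}=\sum_s y_s\boldN^{(s)}_{i,\ell}=\sum_s y_s(\nu_i\nu_\ell)_s$ matches the coefficient reading. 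Once the indexing is pinned down, the algebra collapses to the commutativity identity and there is nothing left to prove. I would present it by first stating the identity $(\boldE^\intercal\boldN^{(l)}\boldE)_{i,j}=(\beta_i\beta_j)_l$ (already available), then observing $\boldE^\intercal\boldN^{(l)}\boldE$ is a coefficient matrix of the rank-one matrix $\boldbeta^\intercal\boldbeta$ hence has rank $\le 1$ — wait, that last step is not automatic (a coefficient matrix of a rank-one matrix need not be rank one in general). So I will instead go directly through the field identity: for any $i,j,\ell,d$, $(\beta_i\beta_\ell)(\beta_j\beta_d)=(\beta_i\beta_d)(\beta_j\beta_\ell)$ in $\bF_{q^n}$, expand both sides over $\boldbeta$, equate $\beta_l$-coordinates, and recognize the result as the minor equation of $\boldGamma_{\text{lin}}$ evaluated at $z_{s,t}=(\beta_s\beta_t)_l$. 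That is the honest proof and it avoids any false appeal to rank of coefficient matrices.
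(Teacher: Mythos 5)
Your proof follows the same route as the paper's: fix a $2\times 2$ minor of $\sum_t y_t\boldN^{(t)}$, write out its linearization, substitute $z_{s,t}=(\beta_s\beta_t)_l$, and use bilinearity of multiplication to collapse the resulting $\bF_q$-sum into the $\beta_l$-coordinate of a field expression that vanishes because the underlying Gram matrix $\boldu^\intercal\boldu$ has rank one. That is exactly the paper's argument, and your decision to abandon the claim that $\boldE^\intercal\boldN^{(l)}\boldE$ itself has rank one was the right call: those matrices are generically \emph{not} rank one, which is precisely why the kernel of $\boldGamma_{\text{lin}}$ is large and the minor attack stalls.

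However, the one step you single out as ``the only thing needing care'' --- the bookkeeping in step (2) --- is stated incorrectly. The coefficients of the $\boldGamma_{\text{lin}}$-equation for the minor $((i,j),(\ell,d))$ come from the entries of $\boldu^\intercal\boldu$, i.e., from the expansions of the products $\nu_i\nu_\ell,\ \nu_j\nu_d,\ \nu_i\nu_d,\ \nu_j\nu_\ell$ over $\boldbeta$ (since $\boldN^{(t)}_{i,\ell}=(\nu_i\nu_\ell)_t$). Hence substituting $z_{s,t}=(\beta_s\beta_t)_l$ and using $\sum_{s,t}(a)_s(d)_t(\beta_s\beta_t)_l=(ad)_l$ yields $\bigl((\nu_i\nu_\ell)(\nu_j\nu_d)-(\nu_i\nu_d)(\nu_j\nu_\ell)\bigr)_l$, not the all-$\beta$ expression $\bigl((\beta_i\beta_\ell)(\beta_j\beta_d)-(\beta_i\beta_d)(\beta_j\beta_\ell)\bigr)_l$ that you wrote; the latter is what you would get from the minors of $\sum_t y_t(\boldE^\intercal\boldN^{(t)}\boldE)$, which is the paper's \emph{secondary} system rather than $\boldGamma_{\text{lin}}$. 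Since both expressions vanish for the same reason (commutativity of multiplication in $\bF_{q^n}$), the fix is one line and the proof goes through, but as written the central identification is made for the wrong system.
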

\begin{proof}
    Let 
    $\begin{pmatrix}
a & b \\
c & d 
\end{pmatrix}\in\bF_{q^n}^{2\times 2}$ be an arbitrary~$2\times 2$ submatrix of $ \boldu^\intercal\boldu=\sum_{i\in[n]}\beta_i\boldN^{(i)}$. First, notice that the respective equation in~$\boldGamma$ is
\begin{align*}
	\left( \sum_{i\in[n]}(a)_iy_i \right)\left( \sum_{i\in[n]}(d)_iy_i \right)-\left( \sum_{i\in[n]}(b)_iy_i \right)\left( \sum_{i\in[n]}(c)_iy_i \right)=0,
\end{align*}
which after linearization becomes
\begin{align}\label{equation:linearizredinGamma}
	\sum_{i,j\in[n]}(a)_i(d)_jz_{i,j}-\sum_{i,j\in[n]}(b)_i(c)_jz_{i,j}=0.
\end{align}

Second, since $\boldu^\intercal\boldu$ is a rank one matrix, so is any of its~$2\times 2$ submatrices, and therefore $ad-bc=0$. Since this implies that $(ad-bc)_l=0$ for every~$l\in[n]$, it follows that
\begin{align*}
0&=(ad-bc)_l=(ad)_l-(bc)_l\\
&=\left(\textstyle{\sum}_{i,j\in[n]}(a)_i(d)_j\beta_i\beta_j\right)_l-\left(\textstyle{\sum}_{i,j\in[n]}(b)_i(c)_j\beta_i\beta_j\right)_l\\
&=\textstyle{\sum}_{i,j\in[n]}(a)_i(d)_j(\beta_i\beta_j)_l-\textstyle{\sum}_{i,j\in[n]}(b)_i(c)_j(\beta_i\beta_j)_l\\
&=\textstyle{\sum}_{i,j\in[n]}(a)_i(d)_j(\boldE^\intercal\boldN^{(l)}\boldE)_{i,j}-\textstyle{\sum}_{i,j\in[n]}(b)_i(c)_j(\boldE^\intercal\boldN^{(l)}\boldE)_{i,j}.\\
\end{align*}
Therefore, it follows from~\eqref{equation:linearizredinGamma} that for every~$l\in[n]$, the assignments  $z_{i,j}=(\boldE^\intercal\boldN^{(l)}\boldE)_{i,j}$ is a zero of~$\boldGamma_{\text{lin}}$, as needed.
\end{proof}
\begin{lemma}
    The $n$ matrices $\boldE^\intercal\boldN^{(l)}\boldE,l\in [n]$ are linearly independent over~$\bF_q$.
\end{lemma}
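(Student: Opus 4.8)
The plan is to exploit the fact that conjugation by the fixed invertible matrix $\boldE$ is a linear bijection on $\bF_q^{n\times n}$, so it suffices to show that the matrices $\boldN^{(1)},\ldots,\boldN^{(n)}$ themselves are linearly independent over $\bF_q$. Indeed, if $\sum_{l\in[n]}\lambda_l\,\boldE^\intercal\boldN^{(l)}\boldE=0$ with $\lambda_l\in\bF_q$, then left-multiplying by $(\boldE^\intercal)^{-1}$ and right-multiplying by $\boldE^{-1}$ gives $\sum_{l\in[n]}\lambda_l\boldN^{(l)}=0$, and since $\boldE$ is invertible this reduction is reversible. So the whole statement collapses to the linear independence of $\{\boldN^{(l)}\}_{l\in[n]}$.

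First I would recall the defining identity $\boldu^\intercal\boldu=\sum_{i\in[n]}\beta_i\boldN^{(i)}$, together with the entrywise description already established in the excerpt, namely $\boldN^{(l)}_{i,j}=(\nu_i\nu_j)_l$, where $(\alpha)_l$ denotes the coefficient of $\beta_l$ in the expansion of $\alpha\in\bF_{q^n}$ over the basis $\boldbeta$. Then I would take scalars $\lambda_1,\ldots,\lambda_n\in\bF_q$ with $\sum_{l\in[n]}\lambda_l\boldN^{(l)}=0$ and read off, for each pair $(i,j)$, that $\sum_{l\in[n]}\lambda_l(\nu_i\nu_j)_l=0$. Since each $\lambda_l$ lies in the base field, the left-hand side is exactly $(\sum_{l\in[n]}\lambda_l\nu_i\nu_j\cdot\text{(stuff)})$—more precisely, writing $\mu=\sum_{l\in[n]}\lambda_l\beta_l\in\bF_{q^n}$, one has $\sum_{l}\lambda_l(\nu_i\nu_j)_l = \mathrm{tr}$-type pairing; the cleanest formulation is to note that the map $\alpha\mapsto\big((\lambda_l)_{l}\text{-combination of }(\alpha)_l\big)$ is the coordinate functional dual to $\mu$, so the condition says that $\nu_i\nu_j$ is annihilated by this functional for all $i,j\in[n]$.

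The key observation is then that $\{\nu_i\nu_j : i,j\in[n]\}$ spans $\bF_{q^n}$ over $\bF_q$: since $\boldu=(\nu_1,\ldots,\nu_n)$ is a basis of $\bF_{q^n}$, in particular $\nu_1\neq 0$, and $\{\nu_1\nu_j\}_{j\in[n]}=\nu_1\cdot\{\nu_j\}_{j\in[n]}$ is already a basis of $\bF_{q^n}$ (multiplication by the nonzero element $\nu_1$ is an $\bF_q$-linear bijection). Hence a linear functional on $\bF_{q^n}$ vanishing on all $\nu_i\nu_j$ must be the zero functional, which forces $\mu=0$, i.e.\ all $\lambda_l=0$ because $\boldbeta$ is a basis. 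This proves linear independence of $\{\boldN^{(l)}\}_{l\in[n]}$, and by the first paragraph, of $\{\boldE^\intercal\boldN^{(l)}\boldE\}_{l\in[n]}$, as claimed.

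I do not anticipate a genuine obstacle here; the only point requiring a little care is making precise the passage from the matrix equation $\sum_l\lambda_l\boldN^{(l)}=0$ to the statement that a nonzero linear functional vanishes on a spanning set, and in particular spelling out why the functional $\alpha\mapsto\sum_l\lambda_l(\alpha)_l$ is nonzero whenever $(\lambda_l)$ is nonzero — this is immediate because $(\beta_{l'})_l=\delta_{l,l'}$, so the functional sends $\beta_{l'}$ to $\lambda_{l'}$. Everything else is bookkeeping with change of basis and the invertibility of $\boldE$.
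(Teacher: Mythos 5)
Your proof is correct, and it reaches the conclusion by a slightly different route than the paper. You argue in the contrapositive/dual direction: assuming $\sum_{l}\lambda_l\boldN^{(l)}=0$, you observe that the $\bF_q$-linear functional $\alpha\mapsto\sum_{l}\lambda_l(\alpha)_l$ annihilates every product $\nu_i\nu_j$, and since $\{\nu_1\nu_j\}_{j\in[n]}$ is already a basis of $\bF_{q^n}$ (multiplication by the nonzero element $\nu_1$ being an $\bF_q$-linear bijection), the functional is zero; evaluating it at $\beta_{l'}$ gives $\lambda_{l'}=0$. The paper instead constructs, for each $l$, explicit nonzero vectors $\bolda,\boldb\in\bF_q^n$ with $\left(\sum_i a_i\nu_i\right)\left(\sum_j b_j\nu_j\right)=\beta_l$, from which $\bolda\boldN^{(i)}\boldb^\intercal=\delta_{i,l}$, so the bilinear form $(\bolda,\boldb)$ acts as a separating witness for $\boldN^{(l)}$. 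Both arguments rest on the same underlying fact --- that products of elements of the basis $\boldu$ of $\bF_{q^n}$ reach all of $\bF_{q^n}$ --- but yours packages it as an annihilator-of-a-spanning-set argument while the paper's packages it as an explicit dual-witness construction; yours avoids having to exhibit a factorization of each $\beta_l$ as a product of two field elements, whereas the paper's produces concrete certificates. The reduction to the $\boldN^{(l)}$ via invertibility of $\boldE$ is identical in both. The only rough spot in your write-up is the aside about a ``$\mathrm{tr}$-type pairing,'' which is vague, but you immediately replace it with the precise and correct statement that the map is the coordinate functional dual to $\sum_l\lambda_l\beta_l$, so nothing is actually missing.
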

\begin{proof}
Since~$\boldE$ is invertible, the claim follows by showing that the matrices $\{\boldN^{(l)}\}_{l\in [n]}$ are linearly independent over~$\bF_q$. For $l\in [n]$ let~$\bolda=(a_i)_{i\in[n]}$ and~$\boldb=(b_i)_{i\in[n]}$ be nonzero vectors in~$\bF_q^n$ such that $(\sum_{i\in[n]} a_i\nu_i)(\sum_{j\in[n]} b_j\nu_j)=\beta_l$. Then, it follows that
\begin{align*}
 \beta_l= \left(\textstyle{\sum}_{i\in[n]} a_i\nu\right)\left(\textstyle{\sum}_{j\in[n]} b_j\nu_j\right)
 =\bolda\boldu^\intercal\boldu \boldb^\intercal=\sum_{i\in[n]}\beta_i\bolda\boldN^{(i)}\boldb^\intercal,
\end{align*}
\end{proof}
and hence
$$
\bolda\boldN^{(i)}\boldb^\intercal=
\begin{cases}
1   & i=l\\
0 & i\neq l
\end{cases}.$$
This readily implies that every~$\boldN^{(l)}$ is linearly independent of the remaining matrices~$\{\boldN^{(j)}\}_{j\ne l}$; otherwise, a nontrivial linear combination~$\boldN^{(l)}=\sum_{j\ne l}\alpha_i\boldN^{(j)}$ would imply that~$1=0$ by multiplying from the left by~$\bolda$ and from the right by~$\boldb^\intercal$. Since this holds for every~$l\in[n]$, the claim follows.

\begin{remark}\label{remark:extensionsDontWork}
	We have experimentally verified for a wide range of~$q$ and~$n$ values that~$\rank(\boldOmega_{\text{lin}})=\binom{n+1}{2}-2n$, namely, that~$\dim\ker(\boldOmega_{\text{lin}})=2n$. In the above it is proved that~$\dim\ker(\boldOmega_{\text{lin}})\ge n$, and the remaining~$n$ dimensions remain unexplained. \ifePrint One might conjecture that different extensions of~$\nu_1,\ldots,\nu_k$ to~$\boldu$ might result in different kernels of~$\boldGamma_{\text{lin}}$, which might explain the missing~$n$ dimensions in~$\ker(\boldOmega_{\text{lin}})$. However, it is shown in \nameref{appendix:difBasisExt} that this is not the case, and all possible extensions of~$\nu_1,\ldots,\nu_k$ to~$\boldnu$ result in identical~$\ker(\boldGamma_{\text{lin}})$.\fi
\end{remark}

\subsubsection{Secondary minor attack.} 
In the above it is shown that by attempting to solve the minor attack via linearization, one is left with yet another MinRank problem, which we call \textit{secondary}. That is, in the secondary problem one must find a rank one vector in the~$\bF_{q^n}$-span of~$\ker(\boldOmega_{\text{lin}})$ (i.e., a rank one matrix in~$\{\matricize(\boldy)\vert \boldy\in \Span_{\bF_{q^n}}(\ker_{\bF_q}(\boldOmega_{\text{lin}})) \}$, where $\matricize(\cdot)$ is defined in~\eqref{equation:vectorToMatrix}). To show the hardness of the \textit{primary} minrank attack, it was shown earlier that it is not feasible to find a rank one matrix in the~$\bF_{q^n}$-span of~$\{\boldN^{(i)}\}_{i\in[n]}$ via linearization. According to Lemma~\ref{lemma:matricesinKerGamma}, to show that hardness of the \textit{secondary} attack it suffices to show that that it is not feasible to find a rank one matrix in the~$\bF_{q^n}$-span of~$\{ \boldE^\intercal\boldN^{(i)}\boldE \}_{i\in[n]}$. Since~$\boldE$ is invertible, it readily follows that a solution to the primary attack is also a solution to the secondary, and vice versa. Therefore, solving the secondary minor attack via linearization is not feasible either.

Moreover, in the secondary attack and in the subsequent ones (tertiary, quaternary, etc.), we observe the following intriguing circular phenomenon. Let $\{\boldB^{(i)}\}_{i\in[n]}\subseteq\bF_q^{n\times n}$ such that $\boldbeta^\intercal\boldbeta=\sum_{i\in[n]}\beta_i\boldB^{(i)}$. Since~$\boldbeta=\boldu\boldE$ and~$\boldu^\intercal\boldu=\sum_{i\in[n]}\beta_i\boldN^{(i)}$, it follows that
\begin{align*}
	\boldu^\intercal\boldu=(\boldE^{-1})^\intercal\boldbeta^\intercal\boldbeta\boldE^{-1}=\sum_{i\in[n]}\beta_i(\boldE^{-1})^\intercal\boldB^{(i)}\boldE^{-1}=\sum_{i\in[n]}\beta_i\boldN^{(i)},
\end{align*}
and hence~$\boldE^\intercal\boldN^{(i)}\boldE=\boldB^{(i)}$. 
That is, in the secondary attack one should find an $\bF_{q^n}$-assignment to~$y_1,\ldots,y_n$ so that~$\sum_{i\in[n]}y_i\boldB^{(i)}$ is of rank one. Then, one repeats the proof of hardness for~$\sum_{i\in[n]}y_i\boldN^{(i)}$ for the special case where~$\boldu=\boldbeta$, i.e., where~$\boldN^{(i)}=\boldB^{(i)}$ and~$\boldE=\boldI$. Lemma~\ref{lemma:matricesinKerGamma} then implies that~$z_{i,j}=(\boldB^{(l)})_{i,j}$ is in the kernel of the linearized system, for every~$\ell\in[n]$. Consequently, while attempting to solve the secondary attack by linearization, one encounters a \textit{tertiary} attack, where one should find an~$\bF_{q^n}$ assignment to~$y_1,\ldots,y_n$ so that~$\sum_{i\in[n]}y_i\boldB^{(i)}$ is of rank one. Clearly, this tertiary attack is \textit{identical} to the secondary one. Moreover, by following the same arguments we have that all subsequent attacks (quaternary, quinary, etc.) are identical to the secondary one.

\begin{remark}
	As mentioned earlier, we have verified experimentally for a wide range of~$q$ and~$k$ values over many randomized constructions, that~$\dim\ker(\boldOmega_{\text{lin}})=2n$, but as of yet have not been able to explain that mathematically. In the context of the secondary attack, one might suggest to take a basis~$\boldv_1,\ldots,\boldv_{2n}$ of~$\ker(\boldOmega_{\text{lin}})$, and search for a rank one matrix in the~$\bF_{q^n}$-span of~$\{ \matricize(\boldv_i) \}_{i\in[2n]}$, again using linearization. We have verified experimentally that the resulting system is of the same rank as~$\boldOmega_{\text{lin}}$, hence not feasibly solvable via linearization, albeit having~$\binom{2n+1}{2}$ columns rather than~$\binom{n+1}{2}$. 
\end{remark}

\section{Other Attacks}\label{section:OtherAttacks}
\subsection{Finding a structured Sidon space}\label{section:structured}
In this section we present an attack which is specific to the structure of the Sidon space~$V$ from Construction~\ref{construction:n=2k}. By guessing an alternative construction of~$\bF_{q^n}$, Eve may assemble a certain set of polynomial equations, which is guaranteed to have a solution. Each such solution defines a subspace~$V'$, most likely a Sidon space, whose coefficient matrices are identical to those of the secret Sidon space~$V$, and hence, it can be used to break the system. However, the resulting equation set is only slightly underdetermined, and hence it is unlikely that a suitable polynomial algorithm exists.

In this attack, Eve guesses an irreducible polynomial~$P_E(x)\in \bF_q[x]$ of degree~$n$, and constructs~$\bF_{q^n}$ as~$F_E\triangleq \bF_q[x]\bmod (P_E(x))$, where~$(P_E(x))$ denotes the ideal in~$\bF_q[x]$ which is generated by~$P_E(x)$. Further, she guesses a basis~$\omega_1,\ldots,\omega_n$ of~$F_E$ over~$\bF_q$ such that~$\omega_1,\ldots,\omega_k$ is a basis for~$G_E$, the unique subfield of size~$q^k$ of~$F_E$. 

To find~$\boldnu'\triangleq(\nu_1',\ldots,\nu_k')\in\bF_{q^n}^k$ and~$\boldbeta'\triangleq(\beta_1',\ldots,\beta_n')\in\bF_{q^n}^n$ such that $\boldM(\boldnu,\boldbeta)=\boldM(\boldnu',\boldbeta')$, Eve defines variables~$\{ u_{i,j} \}_{i,j\in[k]}$, $\{ b_{i,j} \}_{i,j\in[n]}$, and~$\{g_i\}_{i=1}^n$, all of which represent elements in~$\bF_q$, and
\begin{align*}
\gamma'&\triangleq\sum_{j=1}^{n}g_j\omega_j\;,\\
\nu_i'   &\triangleq \left( \sum_{j=1}^ku_{i,j}\omega_j \right)+\left( \sum_{j=1}^{n}g_j\omega_j \right)\left( \sum_{j=1}^ku_{i,j}\omega_j \right)^q\\
&=  \left( \sum_{j=1}^ku_{i,j}\omega_j \right)+\left( \sum_{j=1}^{n}g_j\omega_j \right)\left( \sum_{j=1}^ku_{i,j}\omega_j^q \right)\mbox{ for all }i\in[k],\mbox{ and}
\end{align*}
\begin{align*}
\beta_i'&\triangleq\sum_{j=1}^{n}b_{i,j}\omega_j\mbox{ for all }i\in[n]\;.
\end{align*}
Eve then defines the following~${k+1\choose 2}$ equations over~$\bF_{q^n}$,
\begin{align}\label{equation:degree4}
\nu_s'\nu_t' = \sum_{i=1}^{n}M^{(i)}_{s,t}\beta_i'\mbox{ for all }s,t\in[k],~s\ge t\;.
\end{align}
Finally, by expressing each side of every equation as a linear combination of\linebreak$\{ \omega_i \}_{i\in[n]}$ and comparing coefficients, Eve obtains~$n\cdot {k+1\choose 2}=k^2(k+1)$ equations over~$\bF_q$ in~$n^2+k^2+n=5k^2+2k$ variables. The left hand sides of these equations are polynomials in~$k^2+n=k^2+2k$ variables and degree four, and the right hand sides are linear polynomials in $n^2=4k^2$ variables. 

Since the isomorphism~$f$ exists~\eqref{equation:isomorphism}, the system is guaranteed to have a solution. The resulting subspace~$V'\triangleq \Span\{ \nu_i' \}_{i\in[n]}$ is a Sidon space if the corresponding~$\gamma'$ satisfies the conditions of Construction~\ref{construction:n=2k}. However, it seems that the straightforward algorithms for obtaining a solution are infeasible.

Notice that the terms on the left hand side of~\eqref{equation:degree4} are of either of the forms
\begin{align*}
u_{s,t}u_{\ell,r},~g_ju_{s,t}u_{\ell,r},\mbox{ or }g_ig_ju_{s,t}u_{\ell,r},
\end{align*}
for $s,t,\ell,r\in[k]$ and~$i,j\in[n]$. Hence, a straightforward reduction to the quadratic case includes replacing those terms by $u_{s,t,\ell,r}$, $g_j\cdot u_{s,t,\ell,r}$, and $g_{i,j}u_{s,t,\ell,r}$, respectively. In the resulting quadratic equation set, the number of equations remains~$e\triangleq k^2(k+1)$. The number of variables however, comprises of~$k^4$ variables of the form~$u_{s,t,\ell,r}$, $4k^2$ of the form~$g_{i,j}$, $4k^2$ of the form~$b_{i,j}$, and $2k$ of the form~$g_j$. Thus, the overall number of variables is~$v\triangleq k^4+8k^2+2k$ and the equation set is~\textit{underdetermined} ($e<v$), with~$v=\Theta(e^{4/3})$. 

Algorithms for solving underdetermined systems of quadratic equations were studied in~\cite{underdetermined3,underdetermined2,underdetermined1}. It is known that highly underdetermined systems ($v=\Omega(e^2)$) and highly overdetermined systems ($e=\Omega(v^2)$) are solvable in randomized polynomial time. On the other hand, if~$e=v\pm O(1)$ then the current state-of-the-art algorithms are exponential. The results in our case ($v=\Theta(e^{4/3})$) seem inconclusive. In our experimental section it is shown that standard Gr\"{o}bner basis algorithms are far from feasible for solving this system for~$k<10$.

\subsection{Extracting the message from the cyphertext}\label{section:bilinear}
It is readily verified that extracting~$\bolda$ and~$\boldb$ from~$E(\bolda,\boldb)=(\bolda \boldM^{(i)} \boldb^\intercal)_{i=1}^n$ and $\boldM(\boldnu,\boldbeta)=(\boldM^{(i)})_{i=1}^n$ is equivalent to solving the corresponding non-homogeneous bilinear system of~$2k$ equations and~$2k$ variables. It seems that the state-of-the-art algorithm for solving a bilinear system is given by~\cite[Cor.~3]{Bilinear11}, whose complexity is
\begin{align*}
O\left({n_a+n_b+\min(n_a+1,n_b+1) \choose \min(n_a+1,n_b+1)}^\omega\right),
\end{align*}
where~$n_a$ and~$n_b$ are the number of entries in~$\bolda$ and~$\boldb$, and~$\omega$ is the exponent of matrix multiplication. However, this specialized algorithm requires homogeneity, and in any case applying it to our problem requires~$O({3k+1\choose k+1}^\omega)$, which is infeasible even for small values of~$k$. 

We also note that it is possible to apply algorithms that do not exploit the \textit{bilinear} nature of the system, but rather only its quadratic one. However, evidence show that standard Gr\"{o}bner basis algorithms for solving quadratic equations perform very poorly on quadratic equation sets in which the number of equations and the number of variables is equal.
Following Remark~\ref{remark:maxSpan}, it should be noted that if one would employ a max-span Sidon space as the private key, the resulting bilinear system has $\Theta(k^2)$ equations and~$2k$ variables, and hence it is easy to solve by~\cite[Sec.~6.5]{Courtois} and references therein.

\section{Experiments}\label{section:experiments}

Experiments were run using a computer with an Intel i7-9750H CPU with 16GB of RAM. Computations were done on an engineering cluster node with 2 Intel x86 E5 2650 processors with 64 gigabytes of RAM. For reproducibility, the code for all experiments is given~\cite{Git}. Throughout this section we denote the number of equations by~$e$ and number of variables by~$v$.

Before discussing attacks, we discuss the performance of the system itself. Encoding and decoding use simple finite field operations, and had marginal affect on run-times (see Remark~\ref{remark:performance}). The significant part of the key generation algorithm is the choice of~$\gamma$, which defines the secret Sidon space; this amounts to choosing the quadratic polynomial~$P_{a,b}$ so that it is irreducible over~$\bF_{q^k}$ with~$c\in\overline{\cW}_{q-1}$. This was done at random, and mean success times for different~$k$ and~$q$ values over~$10$ trials are given in Fig.~\ref{figure:keyGeneration}.

The easiest attack to implement seems to be the bilinear one (Sec.~\ref{section:bilinear}), due to the small size of the associated inhomogeneous bilinear system ($v=e=2k$). Specialized algorithms for improved performance on bilinear systems~\cite{Bilinear11} are inapplicable, since they require homogeneity and have exponential complexity. We used the \texttt{F4} algorithm from the \texttt{FGb} library~\cite{FGb} in combination with \texttt{FGb\_sage}~\cite{FGbSage}. The system was homogenized before the Gr\"{o}bner basis was computed. Attacks were efficiently carried out for $k \le 10$ (i.e., $v=21$ and $e=20$). The field size~$q$ was varied between~$q=3$ and~$q=65521$, but had marginal effect on running times. Past $k=10$, the \texttt{F4} algorithm exceeded the~$50\cdot 10^6$ bound on the dimension of the matrix. Average running times, that are given below in Fig.~\ref{figure:bilinearRuntimes}, are consistent with the exponential growth one would expect.

\begin{figure}
\centering
\begin{tikzpicture}
\begin{axis}[
    xlabel={$k$},
    ylabel={time (s)},
    xmin=5, xmax=40,
    ymin=0, ymax=85,
    ytick={10,20,30,40,50,60,70,80},
    xtick={10,20,30,40},
    legend pos=north west,
    ymajorgrids=true,
    grid style=dashed,
]
    \addplot[
    color=blue,
    mark=otimes,
    ]
    coordinates {
    (5,0.05752)
    (10,0.3397)
    (15,0.675)
    (20,1.992)
    (25,3.674)
    (30,7.364)
    (35,14.337)
    (40,16.4358)
    };

   \addplot[
    color=green,
    mark=diamond,
    ]
    coordinates {
    (5,0.09276)
    (10,0.4386)
    (15,1.0123)
    (20,4.605)
    (25,6.809)
    (30,16.834)
    (35,32.524)
    (40,40.818)
    };

   \addplot[
    color=red,
    mark=triangle,
    ]
    coordinates {
    (5,0.132)
    (10,0.611)
    (15,1.642)
    (20,6.069)
    (25,30.778)
    (30,55.1187)
    (35,72.74)
    (40,83.37)
    };
   \legend{$q=5$, $q=53$, $q=541$}
\end{axis}
\end{tikzpicture}\caption{Average running times of randomized key generation for various~$q$ values.}\label{figure:keyGeneration}
\end{figure}
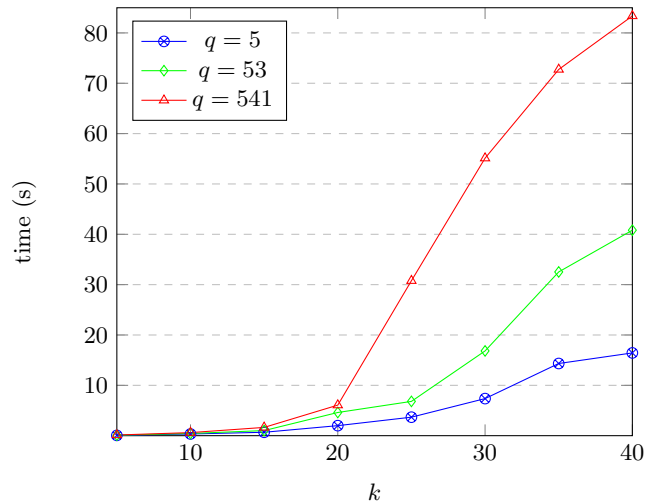

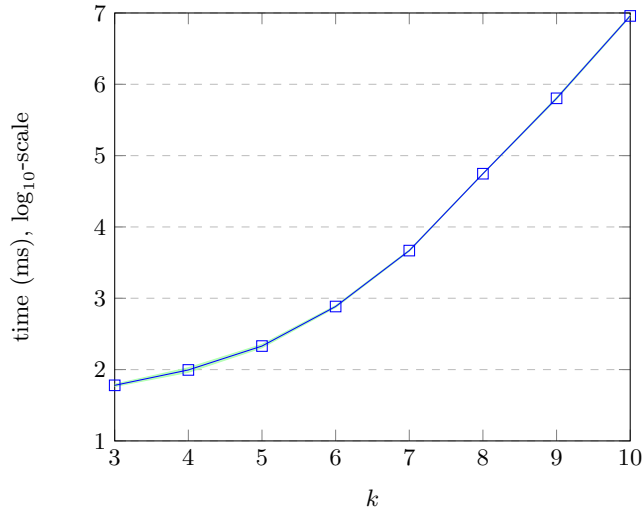
\begin{figure}
\centering
\begin{tikzpicture}
\begin{axis}[
    xlabel={$k$},
    ylabel={time (ms), $\log_{10}$-scale},
    xmin=3, xmax=10,
    ymin=1, ymax=7,
    ytick={1,2,3,4,5,6,7},
    xtick={3,4,5,6,7,8,9,10},
    ymajorgrids=true,
    grid style=dashed,
]
    \addplot[name path=us_top,color=green!30] coordinates {
    (3,1.795151659)
    (4,2.025761036)
    (5,2.353172045)
    (6,2.897338079)
    (7,3.676968306)
    (8,4.750066167)
    (9,5.816770743)
    (10,6.9685825)
    };
    \addplot[name path=us_down,color=green!30] coordinates {
    (3,1.762963985)
    (4,1.961684501)
    (5,2.30553786)
    (6,2.870945295)
    (7,3.6596305)
    (8,4.742546816)
    (9,5.789722406)
    (10,6.948308682)
    };
    \addplot[green!50,fill opacity=0.5] fill between[of=us_top and us_down];
    \addplot[
    color=blue,
    mark=square,
    ]
    coordinates {
    (3,1.779355952)
    (4,1.994903443)
    (5,2.330007701)
    (6,2.884342148)
    (7,3.668385917)
    (8,4.746322765)
    (9,5.803457116)
    (10,6.958563883)
    };
\end{axis}
\end{tikzpicture}\caption{Average running times for the bilinear attack (Sec.~\ref{section:bilinear}) on randomly chosen Sidon cryptosystems. Standard deviations are given in light shaded area, which is barely visible.}\label{figure:bilinearRuntimes}
\end{figure}

Next, executing the minor attack (Sec.~\ref{section:minorAttack}) past $k = 2$ proved difficult. The first class of algorithms considered for this attack were of the eXtended Linearization (XL) family~\cite{Courtois}, but were not promising for a number of reasons. First, XL algorithms require a unique solution, which is not the case in our systems. Second, complexity analysis shows poor asymptotic performance; XL algorithms are polynomial if~$\varepsilon\triangleq \frac{e}{v^2}$ is fixed, but are in general exponential otherwise. In our case~$\varepsilon$ approaches~$0$ as~$k$ increases, and thus we resorted to Gr\"{o}bner basis algorithms.

Experimental evidence shows that while the attack generates~$\Theta(k^5)$ equations\ifePrint~\eqref{equation:SizeOfBoldCBoldOmega},\else,\fi~only $2k^2(2k - 3)$ of them are independent (an upper bound of~$2k^2(2k+1)$ \ifePrint is given in~\eqref{equation:rankKronecker}). \else follows from the discussion in Section~\ref{section:minorAttack}). \fi
Benchmarks for fast Gr\"{o}bner basis algorithms~\cite{M4GB} show that the $k = 3$ system exists on the borderline of what has been computed, and that is supported by experimental evidence. Both implementations of \texttt{F5} as well as the \texttt{FGb} library were used to try and compute Gr\"{o}bner bases for this system, but neither performed well, with the \texttt{FGb} library quickly surpassing the~$50\cdot 10^6$ matrix bound and the \texttt{F5} algorithm not terminating after several days of running. For~$k=4$ and~$k=5$ we were able to compute the degree of regularity, which was~$8$.

The structured attack in Sec.~\ref{section:structured}, which has~$v=\Theta(k^2)$ and~$e=\Theta(k^3)$ proved to be the least feasible, where~$k = 3$ ($v=36$ and $e=54$) and up were completely unsolvable. The system can be reduced to a quadratic one with~$v=\Theta(k^4)$ variables, which did not seem to accelerate the computation.

\section{Discussion}\label{section:Discussion}
\ifePrint
In this paper the Sidon cryptosystem was introduced, and several straightforward attacks were given. These attacks were shown to induce instances of several problems for which it is unlikely that a polynomial algorithm exists. Nevertheless, a finer analysis of the algebraic nature of Sidon spaces might shed some light on the structure of these instances, and consequently, might prove the system insecure. On the other hand, a rigorous proof for the hardness of the Sidon cryptosystem, which has yet to be found, will be a significant achievement in post-quantum cryptography.
\fi 

The first order of business in extending this work is finding the remaining~$n$ dimensions in the kernel of~$\boldOmega_{\text{lin}}$, and we have verified experimentally that these additional~$n$ dimensions do not exist in~$\boldGamma_{\text{lin}}$. More broadly, we suggest that applications of Sidon spaces to cryptography extend beyond what is discussed in the paper. Other than using different constructions of Sidon spaces (e.g.,~\cite{OtherSidon}) in the framework described above, we suggest to study the following concepts in order to strengthen the resulting systems.
\begin{description}
	\item[$r$-Sidon spaces.] The Sidon spaces in this paper enable the product of every two elements to be factored uniquely (up to constants). This is a special case of~$r$-Sidon spaces, in which the product of any~$r$ elements can be factored uniquely (see~\cite[Sec.~VI]{SidonSpaces}). This would extend the hardness of the system from solving bilinear systems to~$r$-linear ones.
	\item[High rank multiplication table.] Most of the susceptibility of the Sidon cryptosystem lies in the matrix~$\boldM(\boldnu)$ (see Sec.~\ref{section:TheSidonCryptosystem}) being of rank one. To remedy that, let~$U,V\in\grsmn{q}{4k}{k}$ be min-span Sidon spaces such that~$V$ is spanned by~$\boldnu=(\nu_i)_{i=1}^k$, $U$ is spanned by~$\boldupsilon=(\upsilon_i)_{i=1}^k$ and~$U^2\cap V^2=\{0\}$. It is readily verified that Bob in able to decrypt the ciphertext even if the matrix~$\boldM(\boldnu)=\boldnu^\intercal\boldnu$ is replaced by~$\boldnu^\intercal\boldnu+\boldupsilon^\intercal\boldupsilon$: Bob will begin by extracting~$\bolda\boldnu^\intercal\boldnu\boldb$ from the ciphertext, which is possible since~$U^2\cap V^2=\{0\}$, and continue similarly. If the vectors~$\boldnu$ and~$\boldupsilon$ are independent over~$\bF_{q^n}$, the resulting matrix is of rank two, and hence the system's resilience against MinRank attacks is increased.
\end{description}

 {\bf Funding.} This work was partially supported by the European Research Council (ERC grant number 852953) and by the Israel Science Foundation (ISF grant number 1030/15). 
\ifePrint
\section*{Appendix~A}\label{appendix:omittedProof}
\textbf{An omitted proof.}
\begin{lemma}\label{lemma:Qk}
	For any prime power~$q$ and an integer~$k$, the size of~$\cQ_k$ (see Section~\ref{section:TheSidonCryptosystem}) is~$\frac{(q^k-1)(q^k-q)}{2(q-1)}+q^k-1$.
\end{lemma}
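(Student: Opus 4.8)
The plan is to count rank-one matrices directly and then pass to the quotient by~$\sim$. Write~$R$ for the number of~$k\times k$ rank-one matrices over~$\bF_q$ and~$S$ for the number of those that are symmetric. As recalled in Section~\ref{section:TheSidonCryptosystem}, a symmetric rank-one matrix forms a singleton class under~$\sim$, while a non-symmetric rank-one matrix~$\boldA$ is paired only with~$\boldA^\intercal$, which is itself non-symmetric, rank one, and distinct from~$\boldA$. Hence~$|\cQ_k| = S + \tfrac12(R-S) = \tfrac12(R+S)$, and it remains to evaluate~$R$ and~$S$.

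For~$R$: every rank-one matrix has the form~$\boldx^\intercal\boldy$ with nonzero~$\boldx,\boldy\in\bF_q^k$, and~$\boldx_1^\intercal\boldy_1=\boldx_2^\intercal\boldy_2$ exactly when~$(\boldx_2,\boldy_2)=(\lambda\boldx_1,\lambda^{-1}\boldy_1)$ for some~$\lambda\in\bF_q^*$ (the column space pins down~$\boldx$ up to a scalar, which then forces~$\boldy$). Therefore~$R=(q^k-1)^2/(q-1)$.

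For~$S$: if~$\boldx^\intercal\boldy$ is symmetric with~$\boldx\neq 0$, comparing its~$(i,j)$ and~$(j,i)$ entries gives~$x_iy_j=x_jy_i$ for all~$i,j$, so~$\boldy=\mu\boldx$ for some~$\mu\in\bF_q^*$; thus the symmetric rank-one matrices are precisely~$\{\,c\,\boldx^\intercal\boldx : \boldx\in\bF_q^k\setminus\{0\},\ c\in\bF_q^*\,\}$. For a fixed line~$L=\Span_{\bF_q}\{\boldx\}$, replacing the representative~$\boldx$ by~$\lambda\boldx$ only changes~$c$ to~$c/\lambda^2$, so exactly~$q-1$ distinct matrices arise from~$L$, and matrices from distinct lines differ (distinct column spaces). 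Since there are~$(q^k-1)/(q-1)$ lines, $S=q^k-1$; note nothing special happens in characteristic two, as~$\boldx^\intercal\boldx$ always has a nonzero diagonal entry.

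Substituting, $|\cQ_k|=\tfrac12\bigl(\tfrac{(q^k-1)^2}{q-1}+(q^k-1)\bigr)=\tfrac{(q^k-1)^2}{2(q-1)}+\tfrac{q^k-1}{2}$; writing one factor~$q^k-1=(q^k-q)+(q-1)$ in the numerator of the first term turns this into~$\tfrac{(q^k-1)(q^k-q)}{2(q-1)}+\tfrac{q^k-1}{2}+\tfrac{q^k-1}{2}=\tfrac{(q^k-1)(q^k-q)}{2(q-1)}+(q^k-1)$, the claimed value. The only step needing care is the scalar bookkeeping in the count of~$S$; the rest is routine.
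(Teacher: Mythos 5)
Your proof is correct and follows essentially the same route as the paper's: both rest on the facts that a rank-one matrix $\boldx^\intercal\boldy$ determines $(\boldx,\boldy)$ up to a scalar $\lambda\mapsto(\lambda\boldx,\lambda^{-1}\boldy)$ and is symmetric exactly when $\boldy\in\Span_{\bF_q}(\boldx)$, and then count the $q^k-1$ singleton classes (symmetric matrices) and the transpose-pairs separately. The only cosmetic difference is that you package the total as $\tfrac12(R+S)$ and count the symmetric matrices line-by-line ($q-1$ scalar multiples of $\boldx^\intercal\boldx$ per line), which neatly bypasses the quadratic-residue case analysis the paper invokes for its singleton count.
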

\begin{proof}
	The following statements, which are easy to prove, are left as an exercise to the reader.
	\begin{enumerate}
		\item \label{item:dependence} For every~$\bolda$ and~$\boldb$ in~$\bF_q^k\setminus\{0\}$, the matrix~$\bolda^\intercal \boldb$ is symmetric if and only if~$\bolda\in\Span_{\bF_q}(\boldb)$.
		\item \label{item:sym} Every~$\bolda,\boldb$ in~$\bF_q^k\setminus\{ 0 \}$ and every~$\lambda,\mu\in\bF_q^*$ satisfy that
		$\bolda^\intercal \cdot (\lambda \bolda)=\boldb^\intercal \cdot(\mu \boldb)$ if and only if $\mu\lambda^{-1}$ is a quadratic residue, and~$\bolda=\sqrt{\mu\lambda^{-1}}\cdot \boldb$.
		\item \label{item:nonsym} Every~$\bolda,\boldb,\boldc,$ and~$\boldd$ in~$\bF_q^k$ such that~$\bolda\notin\Span_{\bF_q}(\boldb)$ and $\boldc\notin\Span_{\bF_q}(\boldd)$ satisfy that
		$\bolda^\intercal \boldb=\boldc^\intercal \boldd$ if and only if~$\bolda=\lambda \boldc$ and~$\boldb=\lambda^{-1}\boldd$ for some~$\lambda\in\bF_q^*$.
	\end{enumerate}
Therefore,~\ref{item:dependence} and~\ref{item:sym} imply that~$\cQ_k$ contains~$q^k-1$ equivalence classes of size one. In addition,~\ref{item:dependence} and~\ref{item:nonsym} imply that $\cQ_k$ contains~$\frac{(q^k-1)(q^k-q)}{2(q-1)}$ equivalence classes of size two.
\end{proof}

\section*{Appendix~B}\label{appendix:randomizedEncryption}
\textbf{Randomized encryption.} In certain cryptographic scenarios it is imperative that repeated messages will not induce repeated cyphertexts. That is, Eve must not know that Bob wishes to send Alice a message which has already been sent before. A common solution to this constraint is to use randomness, as suggested below.

In this section it is shown that this can be attained at the price of a slightly larger public key and half of the information rate. Roughly speaking, the idea behind the suggested scheme is to modify the Sidon cryptosystem so that one of the elements~$\bolda$ and~$\boldb$ (see Section~\ref{section:TheSidonCryptosystem}) is random. 

To achieve the above goal, Alice randomly chooses an additional irreducible polynomial $P_R(x)\in\bF_q[x]$ of degree~$k=n/2$, which is appended to the public key, and fixes a canonical basis~$z_1,\ldots,z_k$ of~$F_R\triangleq \bF_q[x]\bmod (P_R(x))$ over~$\bF_q$. Rather than encoding into 
the set~$\cQ_k$, Bob encodes his message as~$\bolda=(a_i)_{i=1}^k\in\bF_q^k\setminus\{0\}$.
By randomly picking $\boldb=(b_i)_{i=1}^k\in\bF_q^k\setminus\{0\}$ and using the canonical basis~$z_1,\ldots,z_k$, Bob defines
\begin{align*}
	\hat{a}&\triangleq \sum_{i=1}^{k}a_iz_i\;, & \hat{b}&\triangleq \sum_{i=1}^{k}b_iz_i\;,\\
	\hat{a}/\hat{b}&\triangleq\hat{c}=\sum_{i=1}^{k}c_iz_i \;\mbox{, and }&\boldc&\triangleq(c_1,\ldots,c_k)\;,
\end{align*}
and sends~$E(\boldc,\boldb)$ to Alice. Upon receiving $E(\boldc,\boldb)$, Alice decrypts $E(\boldc,\boldb)$ as in Section~\ref{section:TheSidonCryptosystem}, obtains $ \{\boldc',\boldb'\} \triangleq \{\lambda\cdot(c_1,\ldots,c_k),\tfrac{1}{\lambda}\cdot(b_1,\ldots,b_k)\}$, for some unknown~$\lambda\in\bF_q$, and computes
\begin{align*}
	\left( \lambda\sum_{i=1}^{k}c_iz_i \right)\left( \frac{1}{\lambda}\sum_{i=1}^{k}b_iz_i \right)=\lambda\hat{c}\cdot\tfrac{1}{\lambda}\hat{b}=\hat{a}.
\end{align*}
By representing~$\hat{a}$ over~$z_1,\ldots,z_k$, Alice obtains~$\bolda$, the plaintext by Bob. It is evident from this scheme that repeatedly transmitting a message~$\bolda$ is highly unlikely to produce identical cyphertexts.

\section*{Appendix~C}\label{appendix:LinearizationAttack}
\textbf{Linearization attack over~$\bF_q$.} 
To obtain an~$\bF_{q^n}$-solution to~$\boldOmega_{\text{lin}}$, set~$y_s=\sum_{j\in[n]}y_{s,j}\delta_j$ for every~$s\in[n]$, where~$\{ \delta_i \}_{i\in[n]}$ is any basis of~$\bF_{q^n}$ that the attacker chooses, and the~$y_{i,j}$'s are variables that represent~$\bF_q$ elements. This transition to~$\bF_q$ can be described using Kronecker products as follows. Let~$\{ c_d^{(i,j)} \}_{i,j,d\in[n]}$ so that~$\delta_i\delta_j=\sum_{d\in[n]}c_d^{(i,j)}\delta_d$ for every~$i$ and~$j$. 
Then, an equation of the form~$\sum_{s,t}\alpha_{s,t}y_sy_t=0$, where~$\alpha_{s,t}\in\bF_q$, is written as
\begin{align*}
	0=\sum_{s,t}\alpha_{s,t}\left( \sum_{j\in[n]}y_{s,j}\delta_j \right)\left( \sum_{i\in[n]}y_{t,i}\delta_i \right)&=
	\sum_{s,t}\alpha_{s,t}\sum_{i,j}\left( \sum_{d\in[n]}c_d^{(i,j)} \delta_d \right)y_{s,j}y_{t,i}\nonumber\\
	&= \sum_{d\in[n]}\delta_d \sum_{s,t,i,j}\alpha_{s,t}c_d^{(i,j)}y_{s,j}y_{t,i}.
\end{align*}
Hence, since~$\{\delta_d\}_{d\in[n]}$ is a basis, and since the remaining scalars are in~$\bF_q$, this equation induces the~$n$ equations
\begin{align}\label{equation:FqCoefficients}
	\sum_{s,t,i,j}\alpha_{s,t}c_a^{(i,j)}y_{s,j}y_{t,i}=0 \text{ for every }a\in[n].
\end{align}
To describe the resulting~$\bF_q$-system, recall that the~$\binom{n+1}{2}$ columns of~$\boldOmega_{\text{lin}}$ are indexed by the monomials~$y_sy_t$, and the rows are indexed by minors~$((i,j),(\ell,d))$. The value of an entry in column~$y_sy_t$ and row~$((i,j),(\ell,d))$ is the coefficient of~$y_sy_t$ in the equation for the~$2\times 2$ zero minor of $(\sum_{i\in[n]}y_i\boldM^{(i)})$ in rows~$i$ and~$j$, and columns~$\ell$ and~$d$. 
It follows from~\eqref{equation:FqCoefficients} 
that by performing the transition to~$\bF_q$ we obtain the coefficient matrix~$\boldOmega_q\triangleq\boldOmega_{\text{lin}}\otimes \boldC$, where~$\boldC\in\bF_q^{n\times n^2}$ is a matrix which contains the~$c_d^{(i,j)}$'s\footnote{More precisely, $\boldC$ contains~$c_d^{(i,j)}$ in entry~$(d,(i,j))$, where the~$n^2$ columns are indexed by all pairs~$(i,j)$, $i,j\in[n]$.}, and~$\otimes$ denotes the Kronecker product.

In the resulting system~$\boldOmega_q$ the columns are indexed by~$y_{s,j}y_{t,i}$ with~$s\le t$, and the rows are indexed by~$(((i,j),(\ell,d)),a)$ (i.e., a pair which represents a minor, up to the symmetry in~\eqref{equation:symmetryReduction}, and an index~$a$ from~\eqref{equation:FqCoefficients}). This matrix contains
\begin{align}\label{equation:SizeOfBoldCBoldOmega}
	\binom{n+1}{2}\cdot n^2 &= 8k^4+4k^3\triangleq k'\text{ columns, and}\nonumber\\
	\binom{\binom{k}{2}+1}{2}\cdot n &= \frac{2k^5-4k^4+6k^3-4k^2}{8}\text{ rows,}
\end{align}
and a satisfying assignment for the~$y_{i,j}$'s corresponds to a vector over~$\bF_q$ in its right kernel. As explained earlier, since there exists a solution (the secret key in the Sidon cryptosystem), the dimension of the right kernel of~$\boldOmega_q$ is at least one. Since the system has more rows than columns, one would expect the dimension of the kernel to be at most one, and the solution should be found by standard linearization techniques (see below). However, we have that~
\begin{align}\label{equation:rankKronecker}
	\rank(\boldOmega_q)&=\rank(\boldOmega\otimes \boldC)=\rank(\boldOmega)\cdot\rank(\boldC)\nonumber\\
	&\le \binom{n+1}{2}\cdot n=4k^3+2k^2\ll k',
\end{align}
and therefore, the attacker is left with yet another MinRank problem, where a rank-one solution (in a sense that will be made clear shortly) should be found in the linear span of at least~$4k^4-2k^2$ matrices~$\{\matricize(\boldv_i)\}$, where~$\{\boldv_i\}\subseteq \bF_q^{k'}$ span~$\ker(\boldOmega_q)$.


To describe in what sense a solution of~$\boldOmega_q$ is of ``rank one,'' index the columns of~$\boldOmega_q$ by tuples~$\cR\triangleq \{(s,j,t,i)\vert s,j,t,i\in[n], s\le t\}$, where each column corresponds to the monomial~$y_{s,j}y_{t,i}$. Then, one must finds a vector~$\boldz=(z_{s,j,t,i})_{(s,j,t,i)\in\cR}\in\bF_q^{\binom{n+1}{2}n^2}$ in the right kernel of~$\boldOmega_q$, and wishes to decompose it to find the respective values for~$\{y_{s,j}\}_{s,j\in[n]}$. 

To find that a solution~$\boldz$ is not parasitic, i.e., that it corresponds to some solution~$\hat{\boldy}=(\hat{y}_{s,j})_{s,j\in[n]}$ to the original quadratic system, one must verify that all the following matrices are of rank one, and moreover, that there exist~$\hat{\boldy}$ which satisfies the following rank one decompositions.

\begin{align*}
	&\begin{pmatrix}
		z_{1,1,1,1} & \ldots & z_{1,1,1,n} & z_{1,1,2,1} & \ldots & z_{1,1,2,n} & \ldots & z_{1,1,n,1} & \ldots & z_{1,1,n,n}\\
		z_{1,2,1,1} & \ldots & z_{1,2,1,n} & z_{1,2,2,1} & \ldots & z_{1,2,2,n} & \ldots & z_{1,2,n,1} & \ldots & z_{1,2,n,n}\\
		\vdots & \ddots & \vdots & \vdots & \ddots & \vdots & \ddots & \vdots & \ddots & \vdots\\
		z_{1,n,1,1} & \ldots & z_{1,n,1,n} & z_{1,n,2,1} & \ldots & z_{1,n,2,n} & \ldots & z_{1,n,n,1} & \ldots & z_{1,n,n,n}\\
	\end{pmatrix}=(\hat{y}_{1,1},\ldots,\hat{y}_{1,n})^\intercal\hat{\boldy}\\
	&\begin{pmatrix}
	z_{2,1,2,1} & \ldots & z_{2,1,2,n} & \ldots & z_{2,1,n,1} & \ldots & z_{2,1,n,n}\\
	z_{2,2,2,1} & \ldots & z_{2,2,2,n}  & \ldots & z_{2,2,n,1} & \ldots & z_{2,2,n,n}\\
	\vdots & \ddots & \vdots & \ddots & \vdots & \ddots & \vdots\\
	z_{2,n,2,1} & \ldots & z_{2,n,2,n} & \ldots & z_{2,n,n,1} & \ldots & z_{2,n,n,n}\\
\end{pmatrix}=(\hat{y}_{2,1},\ldots,\hat{y}_{2,n})^\intercal(\hat{y}_{2,1}\ldots,\hat{y}_{n,n})\\
&\qquad\vdots\\
	&\begin{pmatrix}
	z_{n,1,n,1} & \ldots & z_{n,1,n,n} \\
	z_{n,2,n,1} & \ldots & z_{n,2,n,n} \\
	\vdots & \ddots & \vdots\\
	z_{n,n,n,1} & \ldots & z_{n,n,n,n}\\
\end{pmatrix}=(\hat{y}_{n,1},\ldots,\hat{y}_{n,n})^\intercal(\hat{y}_{n,1}\ldots,\hat{y}_{n,n})\\
\end{align*}
It is readily verified that~$\boldz$ is non-parasitic if and only if all the above are met. However, it is difficult to find such a solution, since the dimension of the right kernel of~$\boldOmega_q$ is at least 
\begin{align*}
	\text{num. of columns}-\rank(\boldOmega_q)&= k'-\rank(\boldOmega_q)\\
	&\ge 8k^4+4k^3-\left( \binom{n+1}{2}-n \right)n=8k^4+2k^2.
\end{align*}

\section*{Appendix~D}\label{appendix:difBasisExt}
\textbf{Different basis extension.} In the spirit of Remark~\ref{remark:extensionsDontWork}, when trying to find different extensions of~$\nu_1,\ldots,\nu_k$ to a basis of~$\bF_{q^n}$ such that the resulting~$\boldGamma$ and~$\boldGamma'$ have disjoint kernels, we discover the following. For~$\boldnu=(\nu_1,\ldots,\nu_k)$, let $\boldu_1=(\boldnu,\boldmu_1)$, $\boldu_2=(\boldnu,\boldmu_2)$ be two possible extensions, with respective matrices~$\boldE_1,\boldE_2$ (i.e., such that~$\boldbeta=\boldu_1\boldE_1=\boldu_2\boldE_2$), respective~$\{ \boldN^{(i)} \}_{i\in[n]}$, $\{ \tilde{\boldN}^{(i)} \}_{i\in[n]}$, and respective  kernel vectors~$\{ \boldn_i \}_{i\in[n]}$ and~$\{ \tilde{\boldn}_i \}_{i\in[n]}$. Let~$\boldbeta^\intercal\boldbeta=\sum_{i\in[n]}\beta_i\boldB^{(i)}$, 
and observe that
\begin{align*}
	\boldu_1^\intercal\boldu_1&=(\boldE_1^{-1})^\intercal\boldbeta^\intercal\boldbeta\boldE_1^{-1}=\sum_{i\in[n]}\beta_i (\boldE_1^{-1})^\intercal \boldB^{(i)}\boldE_1^{-1}\\
	&=\sum_{i\in[n]}\beta_i \boldN^{(i)},\text{ and therefore }\boldE_1^\intercal \boldN^{(i)} \boldE_1=\boldB^{(i)}.\text{ Similarly,}\\
	\boldu_2^\intercal\boldu_2&=(\boldE_2^{-1})^\intercal\boldbeta^\intercal\boldbeta\boldE_2^{-1}=\sum_{i\in[n]}\beta_i (\boldE_2^{-1})^\intercal \boldB^{(i)}\boldE_2^{-1}\\
	&=\sum_{i\in[n]}\beta_i \tilde{\boldN}^{(i)}\text{ and therefore }\boldE_2^\intercal \tilde{\boldN}^{(i)} \boldE_2=\boldB^{(i)}.
\end{align*}
Hence, the respective kernel vectors~$\{ \boldn_i \}_{i\in[n]}$ and~$\{ \tilde{\boldn}_i \}_{i\in[n]}$  are \textit{identical}. 
\fi
\end{document}